\def\l@subsection{\@tocline{2}{0pt}{2.5pc}{5pc}{}}
\def\l@subsubsection{\@tocline{2}{0pt}{5pc}{7.5pc}{}}
\newtheorem{theorem}{Theorem}[section]
\newtheorem{conjecture}[theorem]{Conjecture}
\newtheorem{corollary}[theorem]{Corollary}
\newtheorem{proposition}[theorem]{Proposition}
\theoremstyle{definition}
\newtheorem{definition}[theorem]{Definition}
\newtheorem{remark}[theorem]{Remark}
\newtheorem{problem}[theorem]{Problem}
\newcommand{\C} {\mathbb{C}}
\newcommand{\Z} {\mathbb{Z}}
\newcommand{\F} {\mathbb{F}}
\newcommand{\N} {\mathbb{N}}
\newcommand{\Q} {\mathbb{Q}}
\newcommand{\R} {\mathbb{R}}
\newcommand{\GL} {\operatorname{GL}}
\newcommand{\T}{\operatorname{T}}
\newcommand{\ST} {\operatorname{ST}}
\newcommand{\SL} {\operatorname{SL}}
\newcommand{\Sym}{\operatorname{Sym}}
\newcommand{\mat}{\operatorname{M}}
\DeclareMathOperator{\per}{Per}
\DeclareMathOperator{\sgn}{sgn}
\DeclareMathOperator{\Pf}{Pf}
\newcommand{\poly} {{\rm poly}}
\newcommand{\OO} {\mathcal{O}}
\newcommand{\bcO}{\overline{\OO}}
\newcommand{\cC} {\mathcal{C}}
\newcommand{\cF}{\mathcal{F}}
\newcommand{\cP}{\mathcal{P}}
\newcommand{\cM}{\mathcal{M}}
\newcommand{\cN}{\mathcal{N}}
\newcommand{\tensor}{{\textstyle\bigotimes}}
\newcommand{\la}{\lambda}
\newcommand{\VNP}{\textup{VNP}}
\newcommand{\bz}{\textbf{z}}
\newcommand{\by}{\textbf{y}}
\newcommand{\ba}{\textbf{a}}
\newcommand{\bb}{\textbf{b}}
\newcommand{\bc}{\textbf{c}}
\newcommand{\bx}{\textbf{x}}
\newcommand{\bbm} {\textbf{m}}
\newcommand{\supp} {{\rm supp}}
\newcommand{\Rnote}[1]{\begin{quote}{\color{blue}{{\sf Rafael's Note:} {\sl{#1}}}} \end{quote}}
\title{Search problems in algebraic complexity, GCT, and hardness of generators for invariant rings}
\author{Ankit Garg}
\address{Microsoft Research India, Bangalore}
\email{garga@microsoft.com}
\author{Christian Ikenmeyer}
\thanks{Christian's research was supported by DFG grant IK 116/2-1.}
\address{University of Liverpool}
\email{christian.ikenmeyer@liverpool.ac.uk}
\author{Visu Makam}
\thanks{Visu's research was supported by NSF grant No. DMS -1638352 and NSF grant No. CCF-1412958.}
\address{School of Mathematics, Institute for Advanced Study, Princeton}
\email{visu@ias.edu}
\author{Rafael Oliveira}
\thanks{Part of this work was done while Rafael was at the Simons Institute and at University of Toronto.}
\address{University of Waterloo}
\email{rafael@uwaterloo.edu}
\author{Michael Walter}
\thanks{Michael's research was supported by NWO Veni grant 680-47-459.}
\address{Korteweg-de Vries Institute for Mathematics, Institute for Theoretical Physics, Institute for Logic, Language \& Computation, University of Amsterdam}
\email{m.walter@uva.nl}
\author{Avi Wigderson}
\thanks{Avi's research is supported by NSF grant No. CCF-1412958 and NSF grant CCF-1900460.}
\address{School of Mathematics, Institute for Advanced Study, Princeton}
\email{avi@ias.edu}
\begin{document}

\begin{abstract}
We consider the problem of computing succinct encodings of lists of generators for invariant rings for group actions. Mulmuley conjectured that there are always polynomial sized such encodings for invariant rings of $\SL_n(\C)$-representations. We provide simple examples that disprove this conjecture (under standard complexity assumptions).

We develop a general framework, denoted \emph{algebraic circuit search problems}, that captures many important problems in algebraic complexity and computational invariant theory.
This framework encompasses various proof systems in proof complexity and some of the central problems in invariant theory as exposed by the Geometric Complexity Theory (GCT) program, including the aforementioned problem of computing succinct encodings for generators for invariant rings.

\end{abstract}

\maketitle

\section{Introduction}

In complexity theory, one often encounters problems that ask for an efficiently computable collection of functions/polynomials satisfying a certain property.
Once we are faced with such problems, two natural questions are: how do we represent the property?
And how do we encode the required functions? The answer to these will depend on the context and use.
We will first define the informal notion of an algebraic circuit search problem, and then give illustrative examples.


\begin{definition}[Algebraic circuit search problems]
Given an input (of size $n$), construct an algebraic circuit in a complexity class $\cC$ with $k(n)$-inputs and $m(n)$-outputs such that the polynomials they compute satisfy a desirable property $\cP$.
\end{definition}

Let us illustrate this definition in the context of algebraic proof complexity: in Null\-stellen\-satz-based
proof systems, one is given a set of multivariate polynomials $g_1,\dots,g_r$ over
an algebraically closed field $\F$ and in variables $\bx = (x_1, \ldots, x_n)$, and one wants to
decide whether the system $g_1(\bx) = g_2(\bx) = \ldots = g_r(\bx) = 0$ has a
solution over $\F$.
A fundamental result of Hilbert tells us that the system has no solution if and only if there is
a set of polynomials $\{f_i\}_{i=1}^r$ such that $\sum_i f_i g_i = 1$. This brings us to the Ideal Proof System~\cite{GP14}:
\begin{itemize}
\item \textbf{Ideal Proof System (IPS)}: Given a collection of polynomials $g_1(\bx), g_2(\bx), \dots, g_r(\bx)$, we ask to construct a polynomial sized circuit $C$ with $n + r$ inputs. The desirable property $\cP$ is that $C(x_1,\dots,x_n,g_1(\bx),\dots,g_r(\bx)) = 1$ and that $C(x_1,\dots,x_n,0,\dots,0) = 0$. It is not so hard to see these conditions will give us a linear combination of the form $\sum_i f_i g_i = 1$ as required.
\end{itemize}

\noindent
In~\cite{GP14} the authors show that super-polynomial lower bounds in this proof system imply
algebraic circuit lower bounds (i.e., ${\rm VP \neq VNP}$), which remains a long standing
open problem in complexity theory. Another important point to make is that an instance of 3-SAT, say $\phi$, can be encoded as a collection of polynomials $\{g_i\}$ such that $\phi$ is satisfiable if and only if the $\{g_i\}$ have a common solution. In other words,  $\phi$ is unsatisfiable if and only if $\exists$ polynomials $f_i$ such that $\sum_i f_i g_i = 1$. This converts a co-NP complete problem (unsatisfiability of 3-SAT, called co-3-SAT) into an algebraic circuit search problem of the IPS form described above. The existence of a polynomial sized circuit as demanded by the IPS proof system would mean the existence of $f_i$ with small circuits. But that would mean that co-3-SAT is in NP, thus proving NP = co-NP.

Other important examples of algebraic search problems in proof complexity (with different desirable properties)
are the original Nullstellensatz proof system, Polynomial Calculus, and the Positivstellensatz\footnote{In this case our field is the real numbers, which is \emph{not} algebraically closed.} for sum of squares (SOS) proofs.
For more on these systems we refer the reader to~\cite[Chapter 6]{K19}.

\begin{remark}
An analogous notion of a ``boolean circuit search problem'' can also be introduced in the boolean setting. Also here, important problems such as the construction of pseudorandom generators and the construction of extractors can be captured as boolean circuit search problems.
\end{remark}

\subsection{Geometric Complexity Theory}
The GCT program was proposed by Mulmuley and Sohoni (see \cite{MS01,MS08}) as an approach
(via representation theory and algebraic geometry) to the VP vs.\ VNP problem. While there have been some negative results\footnote{The results of B\"{u}rgisser, Ikenmeyer and Panova, which show that occurrence obstructions cannot give a super-polynomial lower bound
on the determinantal complexity of the permanent polynomial (see \cite{BIP}).} in recent years regarding the techniques one can use towards this program, these results do not disrupt the core framework of the GCT program. Instead, these results indicate the difficulty of the problem from the viewpoint of algebraic combinatorics, and have identified new directions of research in asymptotic algebraic combinatorics. In \cite{GCTV}, Mulmuley views the VP vs.\ VNP problem through the lens of computational invariant theory, and identifies important and interesting problems in computational invariant theory that form a path towards resolving the VP vs.\ VNP problem. These include several conjectures, some of which fit into the framework of algebraic circuit search problems, and have important connections and
consequences to problems in optimization, algebraic complexity, non-commutative computation,
functional analysis and quantum information theory (see \cite{GGOW16, GGOW18, BFGOWW18}).
We therefore believe that a better understanding of algebraic circuit search problems will likely result in fundamental advances in the aforementioned areas.
Some evidence for these conjectures has emerged over the past few years as they have been established for special cases
(see, for example, \cite{GCTV,GGOW16,IQS2, FS, DM, DMOC,DM-arbchar}).


Let us briefly mention an important algebraic circuit search problem and one that will be central to this paper: given a group action, describe a set of generators for the invariant ring (we will elaborate on invariant theory in a subsequent section). Unfortunately, the number of generators for an invariant ring is usually exponential (in the input size of the description of the action). So, in order to get a computational handle on them, Mulmuley suggests in \cite{GCTV} that we should look for a {\em succinct encoding} (defined below in Definition~\ref{def:succ.enc}) using some auxiliary variables. One amazing feature of such a succinct encoding is that it would immediately give efficient randomized algorithms for null cone membership and the orbit closure intersection problems which can then be derandomized in some cases (see, e.g., \cite{FS,IQS2,DMOC}). We will define these problems in a subsequent section, but here we are content to say that many important algorithmic problems such as graph isomorphism, bipartite matching, (non-commutative) rational identity testing, tensor scaling and a form of quantum entanglement distillation are all specific instances (or arise in the study) of null cone membership and orbit closure intersection problems.

Mulmuley conjectures (\cite[Conjecture~5.3]{GCTV}\footnote{In the conjecture, the group is specified to be $\SL_n(\C)$, which was done for the purpose of accessibility and brevity, but it is natural to ask this problem for general connected reductive groups. We will discuss this further in a later section.}) the existence of polynomial sized succinct encodings for generators of invariant rings. The main goal of this paper is to (conditionally) disprove this conjecture. More precisely we give an example of an invariant ring (for a torus action) where the existence of such a circuit would imply a polynomial time algorithm for the $3$D-matching problem, which is well known to be NP-hard. We also give another example (where the group is $\SL_n(\C)$) where the existence of such a circuit would imply VP $\neq$ VNP. Further, the nature of the latter example makes it clear that no simple modification of this conjecture can hold.


The rest of this section will proceed as follows. We first give a brief introduction to invariant theory. Then, we discuss the algebraic search problems of interest in computational invariant theory, followed by the precise statements of our main results. Finally, we discuss some open problems and future directions.

\subsection{Invariant Theory}\label{subsec:inv theo}

Invariant theory is the study of symmetries, captured by group actions on vector spaces (more generally, algebraic varieties), by focusing on the functions (usually, polynomials) that are left \emph{invariant} under these actions. It is a rich mathematical field in which computational methods are sought and well developed (see \cite{DK, Sturmfels}). While significant advances have been made on computational problems involving invariant theory, most algorithms are based on Gr\"obner bases techniques, and hence still require exponential time (or longer).

The basic setting is that of a continuous group%
\footnote{In general, the theory works whenever the group is connected, algebraic and reductive.
However in this paper, we will deal with very simple groups.}
$G$ acting (linearly) on a finite-dimensional vector space~$V = \C^m$.

An {\em action} (also called a {\em representation}) of a group $G \subseteq \GL_n(\C)$ on an $m$-dimensional complex vector space $V$ is a group homomorphism~$\pi\colon G \rightarrow \GL_m(\C)$, that is, an association of an invertible~$m \times m$ matrix~$\pi(g)$ for every group element~$g\in G$, satisfying $\pi(g_1 g_2) = \pi(g_1) \pi(g_2)$ for all~$g_1,g_2\in G$.
To be precise, a group element $g \in G$ acts on a vector $v \in V$ by the linear transformation $\pi(g)$, and in this paper we will be dealing with algebraic actions, that is, the entries of the matrix $\pi(g)$ will be rational functions in the entries of the matrix $g$.
We will write $g \cdot v = \pi(g) v$. Invariant theory is nicest when the underlying field is $\C$ and the group $G$ is either finite, the general linear group $\GL_n(\C)$, the special linear group $\SL_n(\C)$, or a direct product of these groups and their diagonal subgroups. We denote by $\C[V]$ the ring of polynomial functions on $V$.

\textbf{Invariant Polynomials}:
Invariant polynomials are precisely those which cannot distinguish between a vector $v$ and a translate of it by an element of the group, i.e., $g \cdot v$. In other words, a polynomial function $f \in \C[V]$ is called invariant if $f(g \cdot v) = f(v)$ for all $v\in V$ and $g \in G$. Equivalently, invariant polynomials are polynomial functions on $V$ which are left invariant by the action of $G$. More precisely, the action of $G$ on $V$ gives an induced action of $G$ on $\C[V]$, the space of polynomial functions on $V$.
For a polynomial function $p$ on $V$, the group element $g \in G$ sends it to the function $g \cdot p$ which is defined by the formula $(g \cdot p) (v) = p(g^{-1} \cdot v)$ for $v \in V$.
Then, a polynomial function is invariant if and only if~$g \cdot p = p$ for all~$g \in G$.
A set~$\{f_i\}_{i \in I}$ of invariant polynomials is called a \emph{generating set} if any other invariant polynomial can be written as a polynomial in the $f_i$'s.
Two simple and illustrative examples are
\begin{itemize}
\item The symmetric group $G = \mathcal{S}_n$ acts on $V = \C^n$ by permuting the coordinates. In this case, the invariant polynomials are \emph{symmetric} polynomials, and the $n$ elementary symmetric polynomials form a generating set (a result that dates back to Newton).

\item The group $G = \SL_n(\C) \times \SL_n(\C)$ acts on $V = \mat_n(\C)$ by a change of bases of the rows and columns, namely left-right multiplication: that is, the action of $(A,B)$ sends~$X$ to $A X B^T$.
Here, $\det(X)$ is an invariant polynomial and in fact every invariant polynomial must be a univariate polynomial in $\det(X)$. In other words, $\det(X)$ generates the invariant ring.
\end{itemize}
The above phenomenon that the ring of invariant of polynomials (denoted by $\C[V]^G$) is generated by a finite number of invariant polynomials is not a coincidence. The \emph{finite generation theorem} due to Hilbert \cite{Hilb1, Hilb2} states that, for a large class of groups (including the groups mentioned above), the invariant ring must be finitely generated. These two papers of Hilbert are highly influential and laid the foundations of modern commutative algebra and algebraic geometry. In particular, ``finite basis theorem'' and ``Nullstellansatz'' were proved as ``lemmas'' on the way towards proving the finite generation theorem!

\textbf{Orbits and Orbit Closures}:
The \emph{orbit} of a vector $v\in V$, denoted by $\OO_v$, is the set of all vectors obtained by the action of $G$ on $v$.
The \emph{orbit closure} of $v$, denoted by $\overline{\OO}_v$, is the closure of the orbit $\OO_v$ in the Euclidean topology.\footnote{It turns out mathematically more natural to look at closure under the Zariski topology. However, for the group actions we study, the Euclidean and Zariski closures match by a theorem due to Mumford \cite{Mum65}.}
For actions of continuous groups, such as $\GL_n(\C)$, it is more natural to look at orbit closures. The \emph{null cone} for a group action is the set of all vectors which behave like the $0$ vector i.e. the $0$ vector lies in their orbit closure.
Many fundamental problems in theoretical computer science (and many more across mathematics) can be phrased as questions about orbits and orbit closures. Here are some familiar examples:
\begin{itemize}
\item Graph isomorphism problem can be phrased as checking if the orbits of two graphs are the same or not, under the action of the symmetric group permuting the vertices.
\item Geometric complexity theory (GCT) \cite{MS01} formulates a variant of the VP vs.\ VNP question as checking if the (padded) permanent lies in the orbit closure of the determinant (of an appropriate size), under the action of the general linear group on polynomials induced by its natural linear action on the variables.
\item Border rank (a variant of tensor rank) of a $3$-tensor can be formulated as the minimum dimension such that the (padded) tensor lies in the orbit closure of the unit tensor, under the natural action of $\GL_r(\C) \times \GL_r(\C) \times \GL_r(\C)$. In particular, this captures the complexity of matrix multiplication.
\end{itemize}

\subsection{Computational invariant theory, Mulmuley's problems and conjectures}
From its origins in the 19th century, the subject of classical invariant theory has been computational in nature -- one of its central goals is explicit descriptions of generators of invariant rings, their relations, etc. With the more recent advent of the theory of computation, it is only natural to ask for the complexity of these descriptions. The influence of complexity theory has taken an important role in invariant theory as a consequence of the connections to fundamental problems such as VP vs.\ VNP that were uncovered as part of the GCT program by Mulmuley in \cite{GCTV}. In \cite{GCTV}, Mulmuley considers the computational complexity of various invariant theoretic problems. Let $G$ be a group acting on $V$.

\begin{enumerate}
\item (\textbf{Generators}) Output a list of polynomials that generate the invariant ring $\C[V]^G$.
\item (\textbf{NNL}) Output a list of polynomials $f_1,\dots,f_r$, such that each $f_i$ is a homogeneous polynomial and the invariant ring $\C[V]^G$ is integral over $\C[f_1,\dots,f_r]$.\footnote{This is equivalent to the condition that the zero locus of $f_1,\dots,f_r$ is precisely the null cone.}
\item (\textbf{Orbit closure intersection}) Given two elements of the vector space, do their
orbit closures intersect?
\item (\textbf{Null cone membership}) Given an element of the vector space, does the $0$
vector lie in its orbit closure?
\end{enumerate}

\noindent
Let us point out straight away that Generators and NNL (Noether Normalization Lemma) are both algebraic circuit search problems (we will define Generators as an algebraic circuit search problem more precisely below).
Orbit closure intersection and Null cone membership are not algebraic circuit search problems, but are related to Generators and NNL in a way that will become clear in a later discussion.
We will not get into the details of how the group is given and how the group action is described.
It turns out that even for simple groups and group actions, these problems turn out to be interesting.
They have been long studied and many algorithms have been developed in the invariant theory community \cite{DK, Sturmfels}.
Mulmuley \cite{GCTV} introduced these problems to theoretical computer science with the hope of making progress on the polynomial identity testing (PIT) problem. Before describing the main conjectures in Mulmuley's paper, let us see what it even means to output a list of generating polynomials for an invariant ring. Typically the number of generating polynomials can be exponential in the dimension of the group and the vector space. To get around this issue, Mulmuley introduced the following notion of a \emph{succinct encoding} of the generators of an invariant ring (which in fact applies to any collection of polynomials).

\begin{definition}[\textbf{Succinct encoding of generators}] \label{def:succ.enc}
	Fix an action of a group $G$ on a vector space $V = \C^m$. We say that an arithmetic circuit
	$\cC(x_1,\ldots, x_m,y_1,\ldots, y_r)$ \emph{succinctly encodes} the generators of the invariant
	ring if the set of polynomials formed by evaluating the $y$-variables,
	$\{\cC(x_1,\ldots, x_m, \alpha_1,\ldots, \alpha_r)\}_{\alpha_1,\ldots, \alpha_r \in \C}$,
  is a generating set for the invariant ring~$\C[V]^G$.
\end{definition}

\begin{remark}
	The \emph{size} of a succinct encoding as defined above is given by the size of the circuit
	$\cC(x_1, \ldots, x_m, y_1, \ldots, y_r)$, which is measured by the bit complexity of the
	constants used in the computation of $\cC$ as well as the number of gates of the computation
	graph of $\cC$.
  In particular, this means that all constants used in the computation of $\cC$ are rationals.
\end{remark}

The above notion of a succinct encoding motivates us to define the following algebraic search problem.

\begin{problem} [\textbf{Generators}] \label{prob:generators}
Let $G$ be a group of dimension $n$ and that acts algebraically on an $m$-dimensional vector space $V$ by linear transformations. Output a ${\rm poly}(n,m)$ sized circuit $\cC(x_1,\ldots, x_m,y_1,\ldots, y_r)$ such that the polynomials $\{\cC(x_1,\ldots, x_m, \alpha_1,\ldots, \alpha_r)\}_{\alpha_1,\ldots, \alpha_r \in \C}$ form a generating set for the invariant ring $\C[V]^G$.
In other words, the problem asks to output a ${\rm poly}(n,m)$ sized succinct encoding for the generators of $\C[V]^G$.
\end{problem}

\begin{conjecture}[Mulmuley]\label{con:generators}
	In the case that $G$ is a connected reductive algebraic group\footnote{We have not defined what
	a connected reductive algebraic group is. One should think of simple groups like the
	general linear group $\GL_n(\C)$, the special linear group $\SL_n(\C)$, or a direct product
	of these groups and their diagonal subgroups.}, Problem~\ref{prob:generators} has a positive answer.
  That is, there exists a $\poly(n,m)$ sized circuit which succinctly encodes the generators of $\C[V]^G$.
\end{conjecture}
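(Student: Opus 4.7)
The plan is to construct the required succinct encoding using the structure theory of connected reductive algebraic groups together with the Reynolds operator. For any such $G$ acting on $V$, the Reynolds operator $R_G\colon \C[V]\to\C[V]^G$ is a $\C$-linear projection onto the invariant subring, obtained by integrating against the Haar measure of a maximal compact subgroup of $G$. Since $\C[V]^G$ is spanned (as a vector space) by images $R_G(m)$ of monomials $m$ in the coordinates, a natural first step is to invoke a degree bound (e.g.\ Derksen's or Popov's bound on $\beta(\C[V]^G)$) so that one only has to handle monomials of degree at most $d = d(n,m)$. One then packages the action of $R_G$ on this finite space as a small circuit $\cC(x_1,\dots,x_m,y_1,\dots,y_r)$, where the $y$-variables parametrize both the choice of monomial being averaged and the choice of coefficients in a linear combination of such averages.

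The second step is to give a uniform construction for each type of connected reductive group. By the classification theorem, every such $G$ is, up to isogeny, a product of a torus with simple groups of classical type $A_n, B_n, C_n, D_n$ and finitely many exceptional types. For the classical groups, the First Fundamental Theorem of invariant theory writes all generating invariants of direct sums of copies of the standard representation (and its dual) as polynomial expressions in $\det$, $\mathrm{Tr}$, and index contractions. These expressions admit natural small circuits in which the $y$-variables encode, for instance, which rows and columns are contracted or which index tuple is selected for a determinantal minor; combined with polarization to reduce general polynomial representations to multilinear ones, this should give $\poly(n,m)$-sized encodings for the semisimple factors. For the exceptional types one could try a case-by-case analysis, using the fact that there are only finitely many such groups.

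The hard step, and where I expect the approach to founder, is encoding the contribution of the torus part. For a torus $T=(\C^\times)^d$ acting on $V$ with integer weights $\omega_1,\dots,\omega_m \in \Z^d$, the invariant ring is spanned by monomials $x^{a}$ with $\sum_i a_i \omega_i = 0$, that is, by lattice points in a rational polyhedral cone. A succinct encoding $\cC(x_1,\dots,x_m,y_1,\dots,y_r)$ in this setting would amount to a short implicit description of the entire monoid of nonnegative integer solutions of a linear Diophantine system, and producing such a description in $\poly(n,m)$ size is a purely combinatorial question about implicit enumeration of lattice points. This step is where I would expect a putative proof of Conjecture~\ref{con:generators} to break down: the difficulty of encoding all weight-zero monomials by a small circuit seems to require solving hard integer-programming instances implicit in the weight data, and any genuine attempt at the conjecture has to confront this obstacle head-on before even returning to the semisimple case.
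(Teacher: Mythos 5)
You are attempting to prove a statement that the paper does not prove --- it \emph{disproves} it. Conjecture~\ref{con:generators} is Mulmuley's conjecture, and the entire point of the paper is that it is false under standard complexity assumptions: Theorem~\ref{thm:torus-action} shows that for the action of $\ST_n(\C)\times\ST_n(\C)\times\ST_n(\C)$ on $\C^n\otimes\C^n\otimes\C^n$ a polynomial-sized succinct encoding of generators would imply ${\rm NP}\subseteq{\rm P/poly}$, and Theorem~\ref{MC-cex} shows that for $\SL_{2kn}(\C)$ acting on $\tensor^{2k}\C^{2kn}$ such an encoding would imply ${\rm VP}={\rm VNP}$. So no correct proof of the conjecture can exist unless these collapses occur, and your proposal necessarily has a fatal gap somewhere.

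To your credit, you located the gap in exactly the right place. Your ``hard step'' --- encoding the weight-zero monomials of a torus action, i.e.\ the lattice points of a linear Diophantine system --- is precisely the paper's first counterexample. For the three-fold torus action above, the minimal-degree invariants are spanned by the maximal $3$-dimensional matching monomials $\prod_i x_{i\sigma(i)\tau(i)}$ (Proposition~\ref{prop:least-degree-torus}); extracting the degree-$n$ homogeneous component of a putative succinct encoding (Proposition~\ref{prop:hom-components}) and running polynomial identity testing on it would decide $3$-dimensional matching in ${\rm P/poly}$, which is NP-hard. So the integer-programming obstacle you flag is not merely a difficulty to be overcome by cleverness --- it is provably insurmountable modulo ${\rm NP}\subseteq{\rm P/poly}$. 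Your optimism about the semisimple factors is also misplaced: the First Fundamental Theorems you invoke cover only very special representations (sums of copies of the standard representation and its dual), and for general tensor representations of $\SL_n(\C)$ the paper shows the unique minimal-degree invariant (the hyperpfaffian) is already VNP-complete, so even a single essential generator can be hard to compute regardless of how the encoding is organized. The correct response to this exercise is not a proof but a refutation along the lines of Sections~\ref{sec:torus} and~\ref{Sec:Ketan}.
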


\noindent
Mulmuley requires the circuit family (that succinctly encodes the generators) to be uniformly computable by a polynomial time algorithm, but we will see that even this weaker conjecture is false (under standard complexity assumptions).

In~\cite[Conjecture~5.3]{GCTV}, Mulmuley states the above conjecture for actions of the group $\SL_n(\C)$.
However, it is evident that there is nothing special about $\SL_n(\C)$ with regard to the GCT program and it is natural to state the conjecture in the generality of connected reductive groups.
Let us also note that it was already evident to Mulmuley that one cannot drop the ``connected'' assumption on the group, because the permanent appears as an invariant polynomial for a non-connected reductive group that would disprove the conjecture immediately using a similar line of reasoning to the one we use in the next section (see, e.g., \cite{BIJL:18}).


To understand Mulmuley's motivation for the conjecture, let us see what it means for the problems of orbit closure intersection and null cone membership. By definition, invariant polynomials are constant on the orbits (and thus on orbit closures as well). Thus, if $\bcO_{v_1} \cap \bcO_{v_2} \neq \emptyset$, then $p(v_1) = p(v_2)$ for all invariant polynomials $p \in \C[V]^G$. A remarkable theorem due to Mumford says that the converse is also true for the large class of reductive groups:

\begin{theorem}[\cite{Mum65}]\label{introthm:Mum} Fix an action of a reductive group $G$ on a vector space $V$. Given two vectors $v_1, v_2 \in V$, we have $\bcO_{v_1} \cap \bcO_{v_2} \neq \emptyset$ if and only if $p(v_1) = p(v_2)$ for all $p \in \C[V]^G$.
\end{theorem}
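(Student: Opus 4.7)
The plan is to prove both implications, with the easy direction being essentially continuity and the hard direction resting crucially on reductivity. For the forward direction, suppose $w \in \bcO_{v_1} \cap \bcO_{v_2}$. Any $p \in \C[V]^G$ is constant on each orbit $\OO_{v_i}$ and, being a polynomial (hence continuous), remains constant on $\bcO_{v_i}$. Therefore $p(v_1) = p(w) = p(v_2)$.

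For the converse I would argue by contrapositive: assuming $\bcO_{v_1} \cap \bcO_{v_2} = \emptyset$, I want to produce an invariant polynomial separating $v_1$ and $v_2$. Since orbit closures are Zariski closed (as noted in the footnote, via Mumford), let $I_1, I_2 \subseteq \C[V]$ be the vanishing ideals of $\bcO_{v_1}$ and $\bcO_{v_2}$; these are $G$-stable because the orbit closures are $G$-invariant. Since the two vanishing loci are disjoint, Hilbert's Nullstellensatz gives $I_1 + I_2 = \C[V]$, so we can write $1 = f_1 + f_2$ with $f_i \in I_i$. Crucially, the $f_i$ need not themselves be invariant, so further work is required.

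The decisive ingredient is the Reynolds operator $R \colon \C[V] \to \C[V]^G$, whose existence is precisely the algebraic content of reductivity: complete reducibility of rational $G$-representations yields, degree-by-degree on the finite-dimensional graded pieces of $\C[V]$, a canonical $G$-equivariant projection onto the trivial isotypic component. The two properties I need are that $R$ is a projection onto $\C[V]^G$ (so $R(1) = 1$) and that $R$ maps any $G$-stable subspace into itself (so $R(I_i) \subseteq I_i$). Applying $R$ to $1 = f_1 + f_2$ yields $1 = R(f_1) + R(f_2)$ with $R(f_i) \in \C[V]^G \cap I_i$. Evaluating at $v_1 \in \bcO_{v_1}$ gives $R(f_1)(v_1) = 0$, and then $R(f_1)(v_2) = 1 - R(f_2)(v_2) = 1 - 0 = 1$ since $R(f_2)$ vanishes on $\bcO_{v_2}$. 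Hence $p := R(f_1) \in \C[V]^G$ separates $v_1$ and $v_2$.

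The main obstacle is establishing the existence and key properties of the Reynolds operator, which packages the reductivity hypothesis. For complex reductive $G$ one standard route is Weyl's unitary trick: construct $R$ by Haar-integration over a maximal compact subgroup $K \subseteq G$, then check that the resulting average is in fact algebraic and $G$-invariant because the $K$-invariants and $G$-invariants coincide in any rational representation. Without reductivity neither complete reducibility nor a separating Reynolds operator is available, and the theorem can fail, which is why the hypothesis is essential rather than cosmetic.
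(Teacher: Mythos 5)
Your argument is correct, and it is the standard proof of this separation theorem from geometric invariant theory; the paper itself offers no proof, simply citing Mumford's book, so there is nothing internal to compare against. Both halves of your argument are sound: the easy direction is continuity of invariants on orbit closures, and the hard direction is exactly the classical chain Nullstellensatz $\Rightarrow$ $1 = f_1 + f_2$ with $f_i$ in the vanishing ideal $I_i$ of $\bcO_{v_i}$ $\Rightarrow$ apply the Reynolds operator. Two small points worth making explicit if you were to write this up fully. First, the identification of Euclidean and Zariski closures of orbits (so that the Nullstellensatz applies to $\bcO_{v_i}$) is itself a nontrivial fact about orbits of algebraic group actions over $\C$, which the paper also attributes to Mumford; you correctly flag that you are using it. Second, since the ideals $I_i$ are not homogeneous, the inclusion $R(I_i) \subseteq I_i$ should not be deduced degree-by-degree from the graded pieces of $\C[V]$; rather, one uses that the action is locally finite, so any $f \in I_i$ lies in a finite-dimensional $G$-stable subspace $W \subseteq I_i$, on which $R$ restricts to the projection onto $W^G \subseteq W$. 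Your statement that $R$ preserves every $G$-stable subspace is the right property, and this is how one justifies it.
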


Now suppose one had a succinct encoding $\cC(x_1,\ldots, x_m,y_1,\ldots, y_r)$ for action of a group $G$ on $V = \C^m$. Then because of Mumford's theorem, for two vectors $v_1$ and $v_2$, their orbit closures intersect iff the two polynomials $\cC(v_1(1),\ldots, v_1(m),y_1,\ldots, y_r)$, \linebreak $\cC(v_2(1),\ldots, v_2(m),y_1,\ldots, y_r)$ are identically the same. These are instances of polynomial identity testing (PIT)! Thus if Conjecture \ref{con:generators} were true (and additionally the succinct encoding circuits were polynomial time computable), it immediately gives randomized polynomial time algorithms for the orbit closure intersection and null cone membership problems.
This also gives us a nice family of PIT problems to play with. Perhaps one might hope that solving these PIT instances will result in development of new techniques which might shed a light on the general PIT problem. In fact, for the first few group actions that were studied in this line of work, \emph{simultaneous conjugation} \cite{GCTV, FS} and \emph{left-right action} \cite{GGOW16, IQS2, DM}, for which there are polynomial sized succinct encodings of generators, the null cone membership problems correspond to PIT problems for restricted models of computation: \emph{read-once algebraic branching programs} and \emph{non-commutative formulas with division}\footnote{Actually a stronger model concerning inverses of matrices.}, both of which have been successfully derandomized, see \cite{FS,GGOW16,IQS2}.

\subsection{Our results}

While the truth of Conjecture~\ref{con:generators} would have great implications, we prove that it is false under plausible complexity hypotheses.
We first state our counterexamples (they are very simple, and probably many others exist), and then discuss how a related conjecture may be true and almost as powerful as the original.

For our first counterexample, we analyze a simple (torus) action on $3$-tensors.
Here, $\ST_n(\C)$ denotes the group of $n\times n$ diagonal matrices with determinant $1$.

\begin{theorem}\label{thm:torus-action}
Consider the natural action of $G = \ST_n(\C) \times \ST_n(\C) \times \ST_n(\C)$ on $V = \C^n \otimes \C^n \otimes \C^n$. Then any set of generators for the invariant ring cannot have a polynomial sized (in $n$) succinct encoding, unless ${\rm NP \subseteq P/poly}$.
\end{theorem}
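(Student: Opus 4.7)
The plan is to exploit the particularly transparent structure of torus invariants in order to reduce $3$D-matching to polynomial identity testing on the alleged encoding. Write $\bx=(x_{ijk})_{i,j,k\in[n]}$ for the coordinates on $V$.

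First I would analyze the invariant ring. Since $G$ is a torus, $\C[V]^G$ is spanned by invariant monomials, and the monomial $\bx^e:=\prod_{i,j,k}x_{ijk}^{e_{ijk}}$ is invariant exactly when the exponent tensor $(e_{ijk})\in\Z_{\geq 0}^{n^3}$ has equal slice sums along all three modes, i.e.\ there is a common value $d\geq 0$ with $\sum_{j,k}e_{ijk}=\sum_{i,k}e_{ijk}=\sum_{i,j}e_{ijk}=d$ for every $i,j,k$. Each such monomial has total degree $nd$, so every positive-degree invariant has degree divisible by $n$ and the smallest positive degree is exactly $n$. For $d=1$ each slice contains a unique $1$, so these monomials are indexed by pairs $(\sigma,\tau)\in S_n\times S_n$ via
\[
P_{\sigma,\tau}(\bx):=\prod_{i=1}^n x_{i,\sigma(i),\tau(i)},
\]
and the $(n!)^2$ distinct monomials $P_{\sigma,\tau}$ form a basis of the degree-$n$ homogeneous piece $R_n\subseteq\C[V]^G$. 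Given an instance $T\subseteq[n]^3$ of $3$D-matching with indicator tensor $v_T\in\{0,1\}^{n^3}$, observe $P_{\sigma,\tau}(v_T)=1$ precisely when $\{(i,\sigma(i),\tau(i)):i\in[n]\}\subseteq T$, so $T$ admits a $3$D perfect matching if and only if some element of $R_n$ is nonzero at $v_T$.

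Next, assume a succinct encoding $\cC(\bx,\by)$ of size $\poly(n)$ and extract its degree-$n$ in $\bx$ component $\cC_n(\bx,\by)$ as a $\poly(n)$-size circuit. The standard homogenization trick is to introduce a fresh variable $t$ and simulate $\cC(t\bx,\by)$ modulo $t^{n+1}$, maintaining every intermediate value as a polynomial in $t$ of degree at most $n$; the coefficient of $t^n$ in the output is precisely $\cC_n(\bx,\by)$, with only a $\poly(n)$ blow-up in size. Crucially, since $\{\cC(\bx,\alpha):\alpha\in\C^r\}$ generates $\C[V]^G$ as a $\C$-algebra and $R_n$ is disjoint from products of two positive-degree invariants (those live in degree $\geq 2n$), the set $\{\cC_n(\bx,\alpha):\alpha\in\C^r\}$ must linearly span all of $R_n$. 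Hence $\cC_n(v_T,\by)\equiv 0$ as a polynomial in $\by$ if and only if every element of $R_n$ vanishes at $v_T$, if and only if $T$ has no $3$D perfect matching.

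Finally, deciding whether $\cC_n(v_T,\by)$ is identically zero is an instance of polynomial identity testing on a $\poly(n)$-size arithmetic circuit, which by Schwartz--Zippel reduces to evaluation at a random $\alpha\in\C^r$. Using the encoding circuit $\cC$ as nonuniform advice and then derandomizing the PIT step via Adleman's theorem yields a polynomial-size Boolean circuit family deciding $3$D-matching, so $\mathrm{NP}\subseteq\mathrm{P}/\poly$. The one delicate step is the extraction of $\cC_n$ with only polynomial blow-up: working modulo $t^{n+1}$ sidesteps the potentially exponential total $\bx$-degree of $\cC$, which is the place where a naive homogeneous-component extraction could fail and is the main thing to get right in the proof.
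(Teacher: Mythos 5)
Your proposal is correct and follows essentially the same route as the paper: characterize the degree-$n$ invariants as the perfect-matching monomials (and show there are none of lower degree), extract the degree-$n$ homogeneous component $\cC_n(\bx,\by)$ of the alleged encoding by Strassen-style homogenization over $\Q[\by]$, observe that its specializations must span the matching monomials, and reduce $3$D-matching to identity testing of $\cC_n(v_T,\by)$, concluding via nonuniform derandomization. The one point you gloss over is the final PIT step: as a polynomial in $\by$, $\cC_n(v_T,\by)$ can have degree and coefficient bit-length exponential in $n$, so a naive Schwartz--Zippel evaluation is not enough and one must invoke a PIT result for poly-size circuits of exponential degree (e.g., arithmetic modulo random primes, as in the result the paper cites).
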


\begin{corollary}
Conjecture~\ref{con:generators} is false, unless ${\rm NP \subseteq P/poly}$.
\end{corollary}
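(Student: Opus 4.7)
The corollary is immediate from Theorem~\ref{thm:torus-action}, so I outline how to prove the theorem. The overall plan is to show that any succinct encoding of size $\poly(n)$ yields a $\mathrm{P}/\poly$ decision procedure for the $3$D matching problem, which by NP-completeness gives $\mathrm{NP} \subseteq \mathrm{P}/\poly$.

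First, I would translate the invariants into combinatorics. For $G = \ST_n(\C)^3$ acting via $(A,B,C)\cdot T_{ijk} = a_i b_j c_k T_{ijk}$, every monomial $T^e = \prod_{ijk} T_{ijk}^{e_{ijk}}$ is a weight vector, and invariance is equivalent to the exponent tensor $e$ having equal row, column, and depth sums. In particular, the degree-$n$ invariants are spanned by the $(n!)^2$ monomials $T^{\mathbb{1}_M}$ with $M$ ranging over $3$D perfect matchings of $[n]^3$. Given an instance $H \subseteq [n]^3$, form the generic test tensor $T_{H,z}$ whose entry at position $(i,j,k)$ is the indeterminate $z_{ijk}$ if $(i,j,k)\in H$ and $0$ otherwise. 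Then $T^{\mathbb{1}_M}(T_{H,z})$ equals $\prod_{(i,j,k)\in M} z_{ijk}$ when $M \subseteq H$ and vanishes otherwise, so \emph{$H$ has a $3$D perfect matching if and only if some degree-$n$ invariant is nonzero on $T_{H,z}$ as a polynomial in~$z$}.

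Given a hypothetical succinct encoding $\cC(x,y)$ of size $\poly(n)$, I would then package this criterion into a single polynomial identity test. Define
\[
F(x;\, y_1,\dots,y_n;\, s_1,\dots,s_n;\, t_1,\dots,t_n) \;:=\; \bigl[x^{=n}\bigr]\prod_{i=1}^n\bigl(1 + t_i\,\cC(s_i x, y_i)\bigr),
\]
where $[x^{=n}]$ denotes the degree-$n$ homogeneous part in $x$, extractable by a standard Lagrange interpolation of polynomial size. Then $F$ is computed by a $\poly(n)$-size arithmetic circuit over $\Q$. Since the torus action is degree-preserving, each $x$-homogeneous component $\cC_d(x,y)$ is itself an invariant. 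Expanding $\cC(s_i x, y_i) = \sum_d s_i^d \cC_d(x, y_i)$ and interpolating out the $t_i$'s and the $s_i^{d_i}$'s, one sees that as $(y,s,t)$ vary the span of $\{F(\cdot;\, y,s,t)\}$ equals the $\C$-linear span of all products $\prod_{i\in S} \cC_{d_i}(x, y_i)$ with $|S| \leq n$ and $\sum_{i\in S} d_i = n$. By the generation hypothesis on $\cC$, these products span the degree-$n$ part $\C[V]^G_n$ of the invariant ring. Consequently $F(T_{H,z};\, y,s,t) \not\equiv 0$ as a polynomial in $(z,y,s,t)$ if and only if some degree-$n$ invariant is nonzero on $T_{H,z}$, i.e., if and only if $H$ has a $3$D perfect matching.

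PIT for this $\poly(n)$-size, polynomial-degree circuit over $\Q$ is in $\mathrm{coRP}$ by Schwartz--Zippel, hence in $\mathrm{P}/\poly$ by Adleman's theorem (taking $\cC$ together with a fixed hitting set as advice). This places $3$D matching in $\mathrm{P}/\poly$ and hence yields $\mathrm{NP} \subseteq \mathrm{P}/\poly$. The main obstacle I anticipate is the spanning claim above---that products of at most $n$ members of the family $\{\cC(x,\alpha)\}_\alpha$ already fill out all of $\C[V]^G_n$. This is precisely where the grading of the torus action and the definition of ``generating set as an algebra'' are used essentially, and the auxiliary scalars $s_i$ and $t_i$ are introduced to cleanly separate contributions across subsets $S$ and multi-degrees $(d_i)$.
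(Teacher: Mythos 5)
Your argument is correct in substance, and the corollary does follow immediately once the theorem is in hand; but your route to the theorem differs from the paper's at the key ``spanning'' step. The paper first proves (its Proposition on least-degree invariants) that there are \emph{no} nonconstant invariants of degree less than $n$. This minimal-degree fact does all the work: since every generator $\cC(x,\alpha)$ is an invariant whose homogeneous components of degrees $1,\dots,n-1$ must therefore vanish, the degree-$n$ part of any polynomial expression in the generators is forced to be a \emph{linear} combination of the degree-$n$ parts $\cC_n(x,\alpha)$, so these already linearly span $\C[V]^G_n$. Hence a single application of Strassen's homogeneous-component extraction to $\cC$ (viewed as a circuit over $\Q[\by]$) produces the one circuit $\cC_n(\bx,\by)$ needed, and the reduction plugs in the $0/1$ indicator tensor of $H$ and runs PIT. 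You instead sidestep the minimal-degree fact entirely by building the product circuit $\prod_i(1+t_i\,\cC(s_i x,y_i))$ and extracting its degree-$n$ part, so that the evaluations span all degree-$n$ products of homogeneous components of generators. This is more machinery, but it buys genuine robustness: your reduction would survive even if the action had nonconstant invariants below degree $n$, whereas the paper's hinges on that structural fact. Both approaches are valid; the interpolation in the $t_i$ and $s_i$ is legitimate (after the $[x^{=n}]$ extraction the $s_i$-degrees are at most $n$), and repeated generators are captured by identifying $y$-blocks.

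One technical point to repair: your final PIT step is not a ``polynomial-degree'' instance. The circuit $\cC$ has size $s=\poly(n)$, so the degree of $F$ in the $y$-variables, and the bit-complexity of its rational coefficients, can be $2^{\poly(n)}$. Plain Schwartz--Zippel with a naive evaluation does not immediately give $\mathrm{coRP}$ here; the paper is careful to invoke its Theorem on PIT for poly-sized circuits (due to Agrawal--Biswas), which handles exponential degree and exponential coefficient bit-complexity. With that substitution your argument goes through, and Adleman's theorem ($\mathrm{BPP/poly}=\mathrm{P/poly}$) finishes the reduction exactly as in the paper.
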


\begin{remark}
As mentioned previously, a primary motivation for succinct encodings of generators is that they imply (randomized) polynomial time algorithms for null cone membership problem. For the action in Theorem~\ref{thm:torus-action}, it is important to note that even though we do not have a succinct encoding for generators, we still have a polynomial time algorithm for null cone membership since once can reduce it to an instance of linear programming.
For a general connection between null cone membership and optimization, see~\cite{burgisser2019towards}.
\end{remark}

For the above counterexample for the torus action, the notion of a succinct encoding is quite crucial to our argument, and it is natural to wonder if tweaking the notion would get rid of the issue. We give another counterexample where it becomes apparent that the precise form of encoding of the generators is not quite as crucial, as we identify an invariant that is hard to compute and is \emph{essential} to any generating set in a sense that we will make precise in Section~\ref{Sec:Ketan}. Moreover, it is an $\SL_n(\C)$-action, which provides a counterexample to the exact formulation of the conjecture in \cite{GCTV}. 

\begin{theorem}\label{MC-cex}
Let $k \geq 2$ be even.
Consider the action of $G = \SL_{2kn}(\C)$ on $V = \tensor^{2k} \C^{2kn}$. Then any set of generators for the invariant ring cannot have a polynomial sized (in $n$) succinct encoding, unless ${\rm VP = VNP}$.
\end{theorem}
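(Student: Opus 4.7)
The plan is to identify an invariant in $\C[V]^G$ of minimum positive degree that encodes the $n \times n$ permanent, and then extract from a putative succinct encoding a polynomial-size algebraic circuit for it, contradicting $\VP \neq \VNP$ via Valiant's completeness theorem. The two ingredients are (a) a dimension count showing that the minimum positive degree of $\C[V]^G$ is $n$, with $\C[V]^G_n$ one-dimensional, spanned by a single invariant $P_n$, and (b) an explicit linear embedding $A \mapsto T_A$ of $\mat_n(\C)$ into $V$ such that $P_n(T_A)$ is a nonzero scalar multiple of $\per(A)$. The parity hypothesis ``$k$ even'' is crucial for (b).

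For (a), I would apply Schur--Weyl duality. Since $N = 2kn$, nonzero $\SL_N$-invariants in $\Sym^d V^*$ come from $\SL_N$-invariants of $V^{*\otimes d} = (\C^{N})^{*\otimes 2kd}$, which require $N \mid 2kd$, i.e., $n \mid d$; so the minimum positive degree is $n$. At degree $n$, $\Sym^n V^*$ is the $S_n$-invariant subspace of $(\C^{N})^{*\otimes N}$, where $S_n$ acts by permuting $n$ blocks of $2k$ consecutive factors. The $\SL_N$-invariants of $(\C^{N})^{*\otimes N}$ form the isotypic component of the sign representation of $S_N$, which is one-dimensional. The block action of $S_n \subset S_N$ sends each adjacent transposition to a product of $2k$ elementary transpositions (sign $+1$), so $S_n$ acts trivially on this sign representation. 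Therefore $\C[V]^G_n \cong \C$ as claimed.

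For (b), I would realize $P_n$ via the antisymmetric subspace $\bigwedge^{2k} \C^N \hookrightarrow V$: on alternating tensors, $P_n(T)$ coincides up to a fixed scalar with the coefficient of $e_1 \wedge \dots \wedge e_N$ in the wedge power $T^{\wedge n} \in \bigwedge^N \C^N \cong \C$. Partition $[N]$ into $n$ blocks of size $2k$, split each into a left half $B_l^L$ and right half $B_l^R$ of size $k$, and set
\[ T_A := \sum_{i, j = 1}^{n} A_{ij} \, e_{B_i^L} \wedge e_{B_j^R}, \qquad e_S := \bigwedge_{t \in S} e_t. \]
In the expansion of $T_A^{\wedge n}$, the surviving terms are indexed by pairs $(\pi, \rho) \in S_n \times S_n$, each contributing $\prod_l A_{\pi(l), \rho(l)}$ times the sign of the permutation $\tau(\pi, \rho)$ that reorders the resulting wedge into the standard order. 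Crucially, $\tau$ acts on the positions of each of the $2k$ ``slot types'' within a block separately: as $\pi$ on the $k$ left-slot positions and as $\rho$ on the $k$ right-slot positions, so $\sgn(\tau) = \sgn(\pi)^k \sgn(\rho)^k$. For $k$ even this is identically $+1$, and after substituting $\mu = \rho\pi^{-1}$ and summing, $T_A^{\wedge n} = n! \, \per(A) \cdot e_1 \wedge \dots \wedge e_N$, i.e., $P_n(T_A)$ is a nonzero constant times $\per(A)$.

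To conclude, suppose $\cC(x, y)$ is a succinct encoding of size $s$. Strassen's homogenization gives a circuit of size $O(s n^2)$ for the degree-$n$ homogeneous part $\cC^{(n)}(x, y)$ in $x$. For every $\alpha \in \C^r$, $\cC^{(n)}(x, \alpha) \in \C[V]^G_n = \C \cdot P_n$, so $\cC^{(n)}(x, y) = h(y) P_n(x)$ for some $h \in \C[y]$. Since $\C[V]^G$ has no nonconstant invariants of degree below $n$, expanding $P_n$ as a polynomial in the generators $\{\cC(x, \alpha)\}_\alpha$ and extracting the degree-$n$ component shows that $P_n$ must arise as a $\C$-linear combination of the $\cC^{(n)}(x, \alpha_i)$, forcing $h \not\equiv 0$. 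Pick $\alpha_0$ with $h(\alpha_0) \neq 0$: then $P_n(x) = h(\alpha_0)^{-1} \cC^{(n)}(x, \alpha_0)$ has a $\poly(s, n)$-size circuit, and composing with the linear map $A \mapsto T_A$ produces a $\poly(s, n)$-size circuit for $\per(A)$. For $s = \poly(n)$ this yields $\per \in \VP$, contradicting Valiant's theorem. The main technical obstacle is the sign computation in (b), where the parity of $k$ is precisely what converts the signed sum into a permanent rather than a determinant.
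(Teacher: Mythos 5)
Your proposal is correct and follows essentially the same route as the paper: Schur--Weyl duality to show the minimal positive degree is $n$ with a one-dimensional space of invariants spanned by the hyperpfaffian $e_1\wedge\cdots\wedge e_{2kn}$, a linear substitution pairing left/right half-blocks so that the reordering sign is $\sgn(\pi)^k\sgn(\rho)^k=+1$ for even $k$ and the invariant evaluates to $n!\,\per(A)$, and Strassen homogenization plus the ``no invariants below degree $n$'' fact to extract a small circuit for this invariant from any succinct encoding. (Only a cosmetic slip: swapping two adjacent blocks of size $2k$ costs $(2k)^2$, not $2k$, elementary transpositions — the parity, and hence your conclusion, is unaffected.)
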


\begin{corollary}
Conjecture~\ref{con:generators} is false, unless ${\rm VP = VNP}$.
\end{corollary}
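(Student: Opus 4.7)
The plan is to identify the unique (up to scalar) minimum-degree invariant $p$ on $V=\tensor^{2k}\C^N$, with $N=2kn$; show that $p$ is $\VNP$-hard; and then argue that any size-$\poly(n)$ succinct encoding $\cC(x,y)$ for $\C[V]^G$ forces a $\poly(n)$-sized circuit for $p$, contradicting $\VP\neq\VNP$.

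First I would compute the lowest-degree part of the invariant ring. Any degree-$d$ invariant on $V$ has total $\SL_N$-weight $2kd$, so must satisfy $N\mid 2kd$, i.e., $n\mid d$. Hence the minimum positive degree is $n$. By Schur--Weyl duality, the $\SL_N$-invariants of $(\C^N)^{\otimes N}$ form a one-dimensional space spanned by the totally antisymmetric tensor $\varepsilon$. The $S_n$-action on $V^{*\otimes n}\cong (\C^{N*})^{\otimes N}$ permuting the $n$ blocks of $2k$ factors acts on $\varepsilon$ by the sign of the corresponding block permutation; each adjacent block swap is a product of $2k$ disjoint transpositions with sign $(-1)^{2k}=+1$, so the $S_n$-action on $\C\cdot\varepsilon$ is trivial and $\varepsilon$ descends to a degree-$n$ invariant
\begin{equation*}
p(T)\;=\;\sum_{\sigma\in S_N}\sgn(\sigma)\prod_{j=1}^{n}T_{\sigma(2k(j-1)+1),\,\sigma(2k(j-1)+2),\,\ldots,\,\sigma(2kj)},
\end{equation*}
spanning $(\C[V]^G)_n$.

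Next I would establish $\VNP$-hardness of $p$ by projecting the permanent onto it. Identify $[N]=[n]\times[2k]$ into a ``row'' $i\in[n]$ and a ``color'' $c\in[2k]$. For an $n\times n$ matrix $A$, define $T$ by $T_{(i_1,1),(i_2,2),\ldots,(i_{2k},2k)}=A_{i_1,i_2}$ whenever $i_1=i_3=\cdots=i_{2k-1}$ and $i_2=i_4=\cdots=i_{2k}$, and $T=0$ on every other entry. Only color-preserving $\sigma$ contribute to $p(T)$; such $\sigma$ are given by $2k$ permutations $f_1,\ldots,f_{2k}\in S_n$, and the nonvanishing constraint forces $f_1=f_3=\cdots=g$ and $f_2=f_4=\cdots=h$. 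Then $\sgn(\sigma)=\sgn(g)^k\sgn(h)^k$, which equals $+1$ exactly when $k$ is even---the hypothesis of the theorem---so
\begin{equation*}
p(T)\;=\;\sum_{g,h\in S_n}\prod_{j=1}^{n}A_{g(j),h(j)}\;=\;n!\,\per(A),
\end{equation*}
via the substitution $\rho=hg^{-1}$. Since $\per$ is $\VNP$-complete, $p$ is $\VNP$-hard. For odd $k$ the same computation yields $n!\det(A)$, echoing the Pfaffian-like collapse at $2k=2$, which is why the parity of $k$ is indispensable.

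Finally, assume $\cC(x,y)$ is a size-$\poly(n)$ succinct encoding. Every $\cC(x,\alpha)$ is an invariant, so its degree-$n$ homogeneous component $\cC_n(x,\alpha)$ lies in $\C\cdot p$. Because degree-$n$ invariants cannot be written as polynomial combinations of lower positive-degree invariants (none exist), the degree-$n$ parts of the generators must $\C$-linearly span $\C\cdot p$; thus some $\alpha^*$ gives $\cC_n(x,\alpha^*)=c\,p(x)$ with $c\neq 0$. Extracting $\cC_n$ by evaluating $\cC(\lambda x,\alpha^*)$ at $\deg_x(\cC)+1$ values of $\lambda$ and interpolating produces a circuit for $p$ of size $\poly(n)$; composing with the projection of step two yields a circuit of the same size for $\per_n$, forcing $\VP=\VNP$. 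The principal obstacle is step two: designing the substitution reducing $\per$ to $p$ and correctly tracking the sign $\sgn(g)^k\sgn(h)^k$, which is exactly the step where the parity of $k$ becomes essential.
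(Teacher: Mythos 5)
Your proposal follows essentially the same route as the paper: (i) use Schur--Weyl duality to show the minimal positive degree of an invariant is $n$ and that the degree-$n$ invariants form the line spanned by the (symmetrized) antisymmetric tensor, i.e.\ the hyperpfaffian; (ii) project it onto the permanent via a substitution whose surviving terms carry sign $\sgn(g)^k\sgn(h)^k$, so that even $k$ is exactly what makes the signs collapse to $+1$; (iii) extract the degree-$n$ homogeneous part of a hypothetical succinct encoding. Your substitution interleaves the two ``row'' indices across alternating tensor slots while the paper blocks them into the first $k$ and last $k$ slots, but this is cosmetic; the sign bookkeeping and the conclusion $n!\,\per(A)$ are the same. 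Your observation that the degree-$n$ part of some generator must be a nonzero multiple of $p$ (because no positive-degree invariants of degree $<n$ exist) matches the paper's argument.

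The one step that would fail as literally written is the extraction: you propose to recover $\cC_n(x,\alpha^\ast)$ by evaluating $\cC(\lambda x,\alpha^\ast)$ at $\deg_x(\cC)+1$ values of $\lambda$ and interpolating. A circuit of size $s=\poly(n)$ can compute a polynomial of degree up to $2^{s}$ in $x$, and nothing in the hypotheses bounds $\deg_x(\cC)$ polynomially; interpolation would then require exponentially many evaluation points and produce an exponential-size circuit. The fix is standard and is what the paper uses: Strassen's homogeneous-component extraction (Proposition~\ref{prop:hom-components}) yields a circuit of size $O(n^2 s)$ computing the degree-$n$ component regardless of the total degree of $\cC$. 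With that substitution your argument is complete and coincides with the paper's proof.
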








\subsection{Conclusion, open problems and future directions}

We have disproved a conjecture of Mulmuley about the existence of polynomial sized succinct encodings of generators for invariant rings. We want to emphasize that this only serves a first guiding light for Mulmuley's program of understanding the orbit closure intersection problems (and null cone membership problems) and connections to PIT. To solve the orbit closure intersection problems, one does not necessarily need a generating set of generators. This motivates the following definition.

\begin{definition} [Separating set of invariants]
For a group $G$ acting algebraically on a vector space $V$ by linear transformations, a subset $S \subseteq \C[V]^G$ is called a \emph{separating set of invariants} if for all $u,v \in V$ such that $\bcO_u \cap \bcO_v \neq \emptyset$, there exists $f \in S$ such that $f(u) \neq f(v)$.
\end{definition}

\noindent
This leads to a natural algebraic search problem that corresponds to the algorithmic problem of orbit closure intersection.
Mulmuley already suggested that a positive answer to the following search problem would suffice for the purposes of GCT.

\begin{problem} [\textbf{Separators}] \label{Prob:sep}
Let $G$ be a group of dimension $n$ and suppose it acts algebraically on an $m$-dimensional vector space $V$ by linear transformations. Output a ${\rm poly}(n,m)$ sized circuit $\cC(x_1,\ldots, x_m,y_1,\ldots, y_r)$, if one exists, such that the set of polynomials $S = \{\cC(x_1,\ldots, x_m, \alpha_1,\ldots, \alpha_r)\}_{\alpha_1,\ldots, \alpha_r \in \C}$ is a separating set of invariants.
\end{problem}

Similarly, we can define a search problem that corresponds to the algorithmic problem of null cone membership.

\begin{problem} [\textbf{Null cone definers}] \label{Prob:ncd}
Let $G$ be a group of dimension $n$ and suppose $G$ acts algebraically on an $m$-dimensional vector space $V$ by linear transformations. Output a ${\rm poly}(n,m)$ sized circuit $\cC(x_1,\ldots, x_m,y_1,\ldots, y_r)$ with the property that the set $S = \{\cC(x_1,\ldots, x_m, \alpha_1,\ldots, \alpha_r)\}_{\alpha_1,\ldots, \alpha_r \in \C}$ consists of
	invariant polynomials whose zero locus is precisely the null cone $\mathcal{N}_G(V) = \{v \in V\ |\ 0 \in \bcO_v\}$.
\end{problem}


We conclude the introduction with some open open problems:

\begin{enumerate}
\item Are there polynomial sized succinct encodings for separating invariants or, even simpler, invariants defining the null cone? In other words, do we have positive answers to Problems~\ref{Prob:sep} and \ref{Prob:ncd} for connected reductive groups? Perhaps the first non-trivial example is the natural action of $G = \ST_n(\C) \times \ST_n(\C) \times \ST_n(\C)$ on $V = \C^n \otimes \C^n \otimes \C^n$. Here a tensor $T$ is in the null cone iff there exists vectors $x,y,z \in \R^n$ s.t. $x_i + y_j +z_k > 0$ for all $(i,j,k) \in \supp(T)$\footnote{$\supp(T) = \{(i,j,k) \in [n] \times [n] \times [n]: T_{i,j,k} \neq 0\}$.} and $\sum_i x_i = \sum_j y_j = \sum_k z_k = 0$ (by the Hilbert-Mumford criterion). Is there a polynomial sized circuit $\C((z_{i,j,k}), y_1,\ldots, y_r)$ s.t. $\C(T, y_1,\ldots, y_r)$ is identically zero (as a polynomial in the $y$-variables) iff $T$ is in the null cone?
\item For the natural action of $\SL_n(\C) \times \SL_n(\C) \times \SL_n(\C)$ on $V = \C^n \otimes \C^n \otimes \C^n$, it is not even clear if there exists one invariant which has a polynomial sized circuit. Either produce such an invariant or prove that all invariants are hard to compute.
\item Are there polynomial time algorithms for the orbit closure intersection and null cone membership problems? The analytic approach pursued in the papers \cite{GGOW16, burgisser2017alternating,AGLOW18,burgisser2019towards} seems the most promising approach towards getting such algorithms.
\item More broadly, invariant theory is begging for its own complexity theory and connecting it with ours. This includes finding reductions and completeness results, and characterizations/dichotomies about hard/easy actions. An example of a completeness reduction is the reduction from all quiver actions to the simple left-right action \cite{derksen2000semi, domokos2001semi, schofield2001semi, DM, GGOW18}. Also the papers \cite{GCTV,GGOW16,IQS2, FS, DM, DMOC,DM-arbchar}, as well as the current paper, are trying to identify easy and hard problems in invariant theory.
\end{enumerate}

\section{Preliminaries}


In this section we establish notation and we formally state basic facts and definitions which
we will need in later sections.

\begin{definition}[3-dimensional matching~\cite{K72}]\label{def:3D-matching}
	The \emph{3-dimensional matching} problem is defined as follows:
	\begin{itemize}
		\item[Input:] a set $U \subseteq [n] \times [n] \times [n]$, representing the edges of
		a tripartite, 3-uniform hypergraph.
		\item[Output:] YES, if there is a set of hyperedges $W \subseteq U$ such that $|W| = n$ and
		no two elements of $W$ agree in any coordinate (that is, they form a matching in this
		hypergraph). NO, if there is no such set.
	\end{itemize}
\end{definition}

\begin{theorem}[NP-completeness of 3-dimensional matching~\cite{K72}]\label{thm:3D-matching-NP-complete}
	The 3-dimensional matching problem is NP-complete.
\end{theorem}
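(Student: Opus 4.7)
The plan is to prove the theorem in two steps: membership in NP and NP-hardness via a polynomial-time reduction from 3-SAT. Membership is immediate: a candidate matching $W\subseteq U$ is a set of $n$ triples whose projections to the first, second, and third coordinates each equal $[n]$, and this can be checked in linear time given the certificate $W$. The binary encoding of $W$ has size $O(n\log n)$, so NP membership is clear.

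For NP-hardness, I would follow Karp's original reduction and build, from a 3-CNF formula $\phi$ with $N$ variables and $M$ clauses, a tripartite 3-uniform hypergraph $H_\phi$ on ground set $A\sqcup B\sqcup C$ that admits a perfect matching if and only if $\phi$ is satisfiable. The construction uses three types of gadgets. \emph{Variable gadgets}: for each variable $x_i$ occurring in $k_i$ clauses, introduce $2k_i$ ``literal-occurrence'' elements arranged in a cycle together with $2k_i$ auxiliary elements exclusive to this gadget; the hyperedges are designed so that in any perfect matching the literal-occurrence elements are covered in exactly one of two ways, corresponding to setting $x_i$ to true or to false throughout $\phi$. \emph{Clause gadgets}: for each clause $C_j$, add two new clause elements and three hyperedges, one per literal in $C_j$, each using a literal-occurrence element left ``free'' by the corresponding variable gadget; the two clause elements can be covered by the matching if and only if some literal in $C_j$ is true under the chosen assignment. \emph{Garbage-collection gadgets}: additional hyperedges that absorb all remaining literal-occurrence elements, whose number is balanced so that every perfect matching must behave as above.

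With this construction in hand, the equivalence between satisfying assignments of $\phi$ and perfect matchings of $H_\phi$ is a gadget-by-gadget verification, and since $|H_\phi|=O(N+M)$ the reduction is polynomial time. Finally, one has to pad $A$, $B$, $C$ (adding dummy elements and edges) so that $|A|=|B|=|C|=n$ for a common $n=\poly(N,M)$, matching the normalization of Definition~\ref{def:3D-matching}. The main obstacle is the cardinality bookkeeping: making the garbage-collection gadgets consume \emph{exactly} the literal-occurrence elements left unused in every legal matching, while ensuring that no unintended matching arises by combining literal-occurrence edges from variable gadgets with clause or garbage edges. This is the standard source of bugs in the reduction, and once it is handled the remainder of the argument is routine.
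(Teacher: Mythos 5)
The paper does not prove this theorem at all: it is stated as a classical result and attributed to Karp~\cite{K72}, so there is no internal argument to compare yours against. Your outline is the standard direct reduction from 3-SAT with truth-setting rings, clause gadgets, and garbage collection, which is really the Garey--Johnson presentation rather than Karp's original route (Karp derived 3-dimensional matching by reduction from Exact Cover within his chain of $21$ reductions); both are perfectly legitimate, and the direct gadget construction is arguably the more self-contained of the two. The NP-membership half is correct and complete as you state it. For the hardness half, your sketch correctly identifies all the structural components and, importantly, the one place where such proofs actually break --- guaranteeing that the garbage-collection triples cannot combine with variable- and clause-gadget triples to produce a perfect matching that does not come from a satisfying assignment, and that the counts of leftover literal-occurrence elements ($\sum_i k_i - M = 2M$ of them) exactly match what the garbage gadgets absorb. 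But that verification, together with checking tripartiteness of every gadget (each triple must take one element from each of $A$, $B$, $C$) and the final padding to a common $n$, is the entire content of the proof and is deferred in your write-up; as it stands this is a correct plan rather than a proof. Since the theorem is standard and the paper itself only cites it, the appropriate move here is the same: cite Karp (or Garey--Johnson) rather than reprove it.
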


\subsection{Basic facts from algebraic complexity}

We now give basic facts that from algebraic complexity which we will use in the next sections.

The next proposition shows that homogeneous components of low degree
of an arithmetic circuit can be efficiently computed, with a small blow-up
in circuit size and without the use of any extra constants. This proposition was
originally proved by Strassen in~\cite{S73} and its proof can be found
in~\cite[Theorem 2.2]{SY10}.
In the following proposition, given a polynomial $p(\bx)$, we denote its degree-$d$ homogeneous component by $H_d[p(\bx)]$.

\begin{proposition}[Efficient computation of homogeneous components]\label{prop:hom-components}
	Given a circuit $\cC(\bx)$ of size $s$, then for every $r \in \N$
	there is a homogeneous circuit $\Psi(\bx)$ of size $O(r^2 s)$
	computing $H_0[\cC(\bx)], H_1[\cC(\bx)], \ldots, H_r[\cC(\bx)]$.
	Moreover, the constants used in the computation of the components $H_i[\cC(\bx)]$
	are a subset of the coefficients used in the computation of $\cC(\bx)$.
\end{proposition}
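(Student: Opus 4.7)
The plan is to produce $\Psi$ by a classical gate-wise homogenization of $\cC$ (this is essentially Strassen's construction). For every gate $v$ of $\cC$ computing a polynomial $p_v$, I would introduce $r+1$ new gates $v_0, v_1, \ldots, v_r$ in $\Psi$, with the intended semantics $v_i = H_i[p_v]$. If $v^\star$ is the output gate of $\cC$, then the $r+1$ output gates of $\Psi$ are $v^\star_0, \ldots, v^\star_r$.

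Next I would specify the wiring of the shadow gates by cases on the type of $v$. For an input gate carrying the variable $x_j$, set $v_1 := x_j$ and let the slots $v_i$ for $i \neq 1$ be zero. For an input gate carrying the constant $\alpha$, set $v_0 := \alpha$ and let the other slots be zero. For an addition gate $v = u + w$, set $v_i := u_i + w_i$ for each $i \in \{0,\dots,r\}$. For a multiplication gate $v = u \cdot w$, set $v_i := \sum_{j=0}^{i} u_j \cdot w_{i-j}$. Correctness follows by induction on the topological order of $\cC$: the sum case uses linearity of $H_i$, while the product case uses the identity $H_i[p_u p_w] = \sum_{j=0}^{i} H_j[p_u] H_{i-j}[p_w]$, obtained by expanding the product and grouping monomials by total degree. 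Note that each $v_i$ is by construction a sum/product of homogeneous polynomials whose degrees add up to $i$, so every gate of $\Psi$ computes a homogeneous polynomial, i.e.\ $\Psi$ is a homogeneous circuit.

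For the size bound, each input gate of $\cC$ contributes $O(1)$ nontrivial shadow gates (the others are literal zero wires and need not be materialized), each addition gate contributes $r+1$ new addition gates, and each multiplication gate contributes $O((r+1)^2)$ new gates, since each of the $r+1$ convolution sums uses $r+1$ multiplications and $r$ additions. Summing over the $s$ gates of $\cC$ gives total size $O(r^2 s)$, as required. The critical bookkeeping point is the ``moreover'' clause: no new scalar multipliers ever appear in $\Psi$, because the construction only performs rewiring, zero wires, and plain additions/multiplications of existing shadow gates; any constant used in $\Psi$ enters only through an input-constant gate of $\cC$ copied into some $v_0$.

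There is no real obstacle, since the identity for homogeneous components of a product is elementary; the only thing to be careful about is the size accounting (to avoid a stray factor of $r$) and the constants clause (to make sure that the construction does not secretly introduce, e.g., binomial or inverse-factorial coefficients, which it does not).
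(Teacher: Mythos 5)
Your construction is the standard Strassen homogenization, which is exactly the argument the paper defers to (it cites \cite{S73} and \cite[Theorem 2.2]{SY10} rather than reproving it): shadow gates $v_0,\dots,v_r$ per gate, convolution at product gates, $O(r^2)$ overhead per gate, and no new constants introduced. The proof is correct, including the two points that actually need care here (the size accounting and the ``moreover'' clause about constants, which the paper relies on in the proof of Theorem~\ref{thm:torus-action}).
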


The next theorem, proved by~\cite[Theorem 4.10]{AB03} gives us a randomized polynomial time
algorithm to test whether an algebraic circuit of polynomial size, with rational
coefficients, is identically zero. Another randomized algorithm easily follows
from~\cite[Lemma 2]{S79}, when adapted for polynomials with rational coefficients.

\begin{theorem}[PIT for poly-sized circuits~\cite{AB03}]\label{thm:PIT-BPP}
	Let $P(\bx) \in \Q[\bx]$ be a polynomial in the variables $\bx = (x_1, \ldots, x_n)$, with
	each variable $x_i$ having degree bounded by $d_i$, and whose coefficients are rationals with
	bit complexity bounded by $B$. If $P(\bx)$ is given as an arithmetic circuit of size $s$, then
	there exists a randomized algorithm running in time ${\rm poly}(n, s, \log(B), 1/\epsilon)$
	and using $O\left(\sum_{i=1}^n \log(d_i) + \log(B) \right)$ random bits which tests whether $P(\bx)$ is identically zero.
	If $P(\bx)$ is the identically zero polynomial then the algorithm always succeeds.
	Otherwise, it errs with probability at most $\epsilon$.
\end{theorem}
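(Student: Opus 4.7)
The plan is to combine the individual-degree Schwartz-Zippel-DeMillo-Lipton lemma with a modular reduction of the circuit evaluation. A direct Schwartz-Zippel evaluation over $\Q$ uses the right amount of randomness but is too slow: intermediate values in $\cC(\balpha)$ can have bit complexity as large as $2^{\Omega(s)}$ because each multiplication gate can double the bit size. Reducing modulo a random prime controls this while keeping the randomness budget essentially unchanged.

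First I would sample each $\alpha_i$ uniformly from an integer set $S_i$ of size $c \cdot d_i$ for a suitable constant $c$, and invoke Schwartz-Zippel in its per-variable-degree form: if $P \not\equiv 0$, then $P(\balpha) \neq 0$ in $\Q$ with constant probability, say at least $2/3$. This step uses $\sum_i \log_2 |S_i| = O(\sum_i \log d_i)$ random bits. The algorithm always answers ``zero'' correctly when $P \equiv 0$, since any evaluation of the identically zero polynomial gives zero.

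Second I would sample a random prime $p$ from an interval $[1, N]$ and evaluate $\cC$ at $\balpha$ modulo $p$, keeping every intermediate quantity in $\{0, 1, \ldots, p - 1\}$; this runs in time $\poly(n, s, \log N, \log B)$. For correctness in the nonzero direction, I need to bound the magnitude of the numerator of $P(\balpha) \in \Q$ when nonzero. Writing $P = \sum_\bbeta c_\bbeta \bx^\bbeta$ with $|c_\bbeta| \leq 2^B$ (from the coefficient hypothesis, after clearing the common denominator of the $c_\bbeta$) and $|\alpha_i| \leq c \cdot d_i$, one obtains
\[
  |P(\balpha)| \;\leq\; \prod_i (d_i+1) \cdot 2^B \cdot \prod_i (c \cdot d_i)^{d_i} \;=\; 2^{M},
\]
so the relevant integer has at most $M$ distinct prime factors. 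A uniformly random prime in $[1, \Theta(M / \epsilon)]$ therefore fails to preserve nonzeroness modulo $p$ with probability at most $\epsilon/3$, using $O(\log M) = O(\log B + \sum_i \log d_i)$ further random bits (the $\log(1/\epsilon)$ overhead can be folded into amplification, which is subsumed by the $1/\epsilon$ slack in the running-time bound).

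The main technical obstacle is ensuring that the bit-complexity bound on $|P(\balpha)|$ is independent of the circuit size $s$: the point is that the hypothesis bounds the coefficients of the \emph{polynomial} $P$, not the individual constants appearing inside $\cC$, so the exponential-in-$s$ blow-up of circuit intermediate values does not pollute the bound $M$. With this in place, a union bound over the two random choices together with standard $1/\epsilon$-repetition amplification yield the claimed running time $\poly(n, s, \log B, 1/\epsilon)$ and the $O(\sum_i \log d_i + \log B)$ random bit budget per run, while the perfect completeness on $P \equiv 0$ is automatic because every modular evaluation of the zero polynomial returns zero.
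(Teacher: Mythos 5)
The paper does not actually prove this statement; it imports it verbatim from Agrawal--Biswas \cite[Theorem 4.10]{AB03}, so the comparison below is against that algorithm rather than an in-paper argument.

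Your overall architecture (random evaluation plus reduction modulo a random prime to control the bit-size blow-up of circuit evaluation) is sound, but the first step contains a genuine error that breaks exactly the part of the theorem that is nontrivial, namely the random-bit budget. With $|S_i| = c\cdot d_i$ for a \emph{constant} $c$, the per-variable-degree DeMillo--Lipton--Schwartz--Zippel bound gives failure probability at most $\sum_{i=1}^n d_i/|S_i| = n/c$ (or $1-(1-1/c)^n$ in the product form), which tends to $1$ with $n$; it is not a constant. To make the union bound work you must take $|S_i| = \Omega(n\, d_i)$, and then the randomness used is $\sum_i \log|S_i| = \Omega(n\log n) + \sum_i \log d_i$. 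The $n\log n$ term cannot be removed by any scheme that samples the $n$ coordinates independently from sets large enough for the union bound, and it exceeds the claimed $O\bigl(\sum_i \log d_i + \log B\bigr)$ whenever the $d_i$ are small (e.g., multilinear $P$). This randomness saving is precisely the point of \cite{AB03}: their algorithm does not sample $n$ independent points but instead performs a Kronecker substitution $x_i \mapsto y^{\prod_{j<i}(d_j+1)}$, reducing to a single univariate object of degree $D = \prod_i (d_i+1) - 1$, and then tests it modulo one randomly chosen small prime (or small-degree polynomial) via Chinese remaindering, so only one random object of $O(\log D + \log B) = O\bigl(\sum_i \log(d_i+1) + \log B\bigr)$ bits is ever drawn. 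Your second step (bounding $|P(\balpha)|$ by a quantity $2^M$ depending on $B$ and the $d_i$ but not on $s$, and killing at most $M$ bad primes) is the right idea and is essentially what the Chinese-remaindering step of \cite{AB03} does, though you should also account for clearing the up to $\prod_i(d_i+1)$ rational denominators (which inflates $M$ to $O(B\prod_i(d_i+1) + \sum_i d_i\log d_i)$ --- still giving $\log M = O(\sum_i\log(d_i+1)+\log B)$) and for the fact that sampling a uniformly random prime in $[1,N]$ with only $O(\log N)$ random bits requires a deterministic primality test and a careful failure analysis. Finally, note that for the application in Theorem~\ref{thm:torus-action} only membership in ${\rm BPP}$ is used (via ${\rm BPP/poly}={\rm P/poly}$), so your Schwartz--Zippel route, once corrected to $|S_i|=\Theta(n d_i)$, would suffice for the paper's purposes even though it does not establish the theorem's stated randomness bound.
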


\section{Hardness of Generators for torus actions}\label{sec:torus}

Let $\C^*$ denote the multiplicative group consisting of all non-zero
complex numbers. A direct product $\T_n = (\C^*)^n$ is called a torus, and
is clearly an abelian group. Tori are important examples of reductive
groups -- any abelian connected reductive group is a torus! It is often the case that it is easier to understand tori in comparison with more general (non-abelian) reductive groups. This is no different for invariant theory, see for example \cite{DK,Wehlau}. We also point to \cite[Proposition~3.3]{DM-exp} for an elementary linear algebraic description of the invariant ring for torus actions. Conjecture~\ref{con:generators} already fails in this well behaved setting.
This is the content of our Theorem~\ref{thm:torus-action}, which we will prove in this section.
Recall that $\ST_n(\C) \cong \{ \bz \in \T_n : z_1\cdots z_n = 1 \}$, which is itself a torus.

\begin{theorem}[Theorem~\ref{thm:torus-action}, restated]
	Consider the natural action of $G = \ST_n(\C) \times \ST_n(\C) \times \ST_n(\C)$
	on $V = \C^n \otimes \C^n \otimes \C^n$, where an element
	$(\ba, \bb, \bc) \in G$ acts on a tensor $u \in V$ as follows:
  $(\ba,\bb,\bc) \cdot u := v$, such that
	$v_{ijk} = a_i b_j c_k u_{ijk}$. Any set of generators
	for the invariant ring of this action cannot have a polynomial sized
	(in $n$) succinct encoding, unless ${\rm NP \subseteq P/poly}$.
\end{theorem}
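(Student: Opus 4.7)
The plan is to reduce 3D matching to PIT for polynomial-size arithmetic circuits, using a hypothetical succinct encoding of generators as $\poly(n)$ non-uniform advice, and then invoke Adleman's theorem (${\rm BPP}\subseteq{\rm P/poly}$) to conclude ${\rm NP}\subseteq{\rm P/poly}$.

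First I would record the standard fact that since the action is diagonal in the coordinate basis of $V$, every monomial in the entries $u_{ijk}$ is a weight vector for the torus; hence every invariant polynomial is a $\C$-linear combination of invariant monomials. A monomial $\prod_{\ell} u_{i_\ell j_\ell k_\ell}$ is invariant under $\ST_n(\C)^3$ iff every index in $[n]$ appears the same number of times in each of the three coordinate slots, so every invariant monomial has $u$-degree equal to a non-negative multiple of $n$. In particular the degree-$n$ piece $I_n$ of the invariant ring is spanned by the \emph{matching monomials}
\[ m_{\sigma,\tau}(u) := \prod_{i=1}^n u_{i,\sigma(i),\tau(i)}, \qquad \sigma,\tau \in S_n, \]
which are in bijection with the perfect 3D matchings of $[n]$.

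The key structural step is to show that if $\cC(u,y)$ is any succinct encoding of generators, then the family $\{\cC_n(u,\alpha)\}_{\alpha\in\C^r}$ of $u$-degree-$n$ homogeneous components already linearly spans $I_n$. For each $\alpha$ the polynomial $\cC(u,\alpha)$ is invariant, so by the previous paragraph its homogeneous pieces are concentrated in $u$-degrees that are multiples of $n$: write $\cC(u,\alpha) = c(\alpha) + \cC_n(u,\alpha) + \cC_{2n}(u,\alpha) + \cdots$ with scalar constant term $c(\alpha):=\cC(0,\alpha)$. For any $p\in I_n$, expressing $p = F(\cC(u,\alpha_1),\ldots,\cC(u,\alpha_N))$ using that $\{\cC(u,\alpha)\}_\alpha$ generates $\C[V]^G$ as an algebra, and extracting the $u$-degree-$n$ part of both sides, yields
\[ p(u) \;=\; \sum_{i=1}^N (\partial_i F)\bigl(c(\alpha_1),\ldots,c(\alpha_N)\bigr)\, \cC_n(u,\alpha_i), \]
because every surviving cross-term has $u$-degree at least $2n$. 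Hence ${\rm span}\{\cC_n(u,\alpha)\}_\alpha = I_n$. This is the technically delicate step: it crucially uses that there are no nonzero invariants in intermediate degrees $1,\ldots,n-1$, which is what prevents products of strictly lower-degree generators from contaminating~$I_n$.

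With this in hand, given a 3D matching instance $U\subseteq[n]^3$ I would form the $0/1$ tensor $T\in\C^n\otimes\C^n\otimes\C^n$ with $T_{ijk}=1$ iff $(i,j,k)\in U$. Since $m_{\sigma,\tau}(T)\ne 0$ exactly when $\{(i,\sigma(i),\tau(i)):i\in[n]\}\subseteq U$, the instance $U$ admits a 3D matching iff some $p\in I_n$ evaluates nonzero at $T$, iff $\cC_n(T,y)\not\equiv 0$ as a polynomial in $y$ (by the spanning property). Via Proposition~\ref{prop:hom-components} (applied with $u$-variables of degree one and $y$-variables of degree zero) I can extract $\cC_n$ from $\cC$ with $O(n^2)$ multiplicative size blowup, and substituting $u=T$ produces a $\poly(n)$-size circuit in $y$. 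Testing whether this circuit is identically zero is PIT, which is in BPP by Theorem~\ref{thm:PIT-BPP}. Using $\cC$ as $\poly(n)$ advice and combining with Adleman's theorem places 3D matching in ${\rm P/poly}$; by NP-completeness of 3D matching (Theorem~\ref{thm:3D-matching-NP-complete}), this yields ${\rm NP}\subseteq{\rm P/poly}$, completing the contrapositive.
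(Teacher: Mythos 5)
Your proposal is correct and follows essentially the same route as the paper: extract the degree-$n$ homogeneous component $\cC_n$ of the encoding circuit via Proposition~\ref{prop:hom-components}, use the fact that the degree-$n$ invariants are exactly the span of the 3D-matching monomials (and that there are none of lower degree), and reduce 3D matching to PIT on $\cC_n(T,\by)$ followed by Adleman's theorem. The only difference is that you explicitly justify why $\{\cC_n(u,\alpha)\}_\alpha$ spans $I_n$ (by expanding $p=F(\cC(u,\alpha_1),\dots,\cC(u,\alpha_N))$ and noting all cross-terms land in degree $\ge 2n$), a step the paper asserts without detail — a welcome addition.
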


\begin{proof}
	Suppose that the natural action above has a set of generators with
	a polynomial sized succinct encoding. Thus, there is an arithmetic
	circuit $\cC(\bx, \by)$ of size $s = {\rm poly}(n)$, where
	$\bx = (x_{ijk})_{i,j,k=1}^n$ is the set of variables corresponding
	to $V$ and $\by = (y_1, \ldots, y_r)$ is the set of auxiliary
	variables, with $r = {\rm poly}(n)$.
	Moreover, from the definition of the size of a succinct encoding we also have that the
	constants used in the computation of $\cC(\bx, \by)$ are rational numbers with bit
	complexity bounded by $b = {\rm poly}(n)$. In particular, $\cC(\bx, \by) \in \Q[\bx,\by]$.

	Let us consider the circuit $\cC(\bx, \by)$ as a circuit whose constants
	are in $\Q[\by]$ and whose variables are only the $\bx$ variables,
	that is, a circuit in $\Q[\by][\bx]$.
  Then, Proposition~\ref{prop:hom-components}
	tells us that there exists a homogeneous circuit $\cC_n(\bx, \by)$,
	in the $\bx$ variables, of degree $n$ and size $O(n^2 s)$ that computes the homogeneous component of $\cC(\bx,\by)$ of degree $n$ as a function of~$\bx$.
  Moreover, the constants of this circuit are a subset of the constants used in the circuit $\cC(\bx, \by)$.
  Since we consider the latter as a circuit in only the $\bx$ variables, the constants in this case are given by the elements of $\Q$ used in the computation of
  $\cC$ as well as the auxiliary variables $\by$.
  In particular, $\cC_n(\bx, \by)$ can be written in the following way:
	\begin{equation}\label{eq:deg-n-cks-1}
		\cC_n(\bx, \by) = \sum_{\bbm \in \cN_n(\bx)} f_\bbm(\by) \cdot \bbm,
	\end{equation}
	where $\cN_n(\bx)$ is the set of all monomials of degree $n$ in the
	variables $\bx$ and $f_\bbm(\by)$ are polynomials in the variables $\by$ of
	degree at most~$2^s$, as the circuit $\cC$ has size at most $s$.

	In Proposition~\ref{prop:least-degree-torus} below, we will show that the invariants
	of minimum degree of our action are in degree $n$, and these
	are spanned by the (maximum) $3$-dimensional matching monomials.
  Thus, if a monomial of degree $n$ is invariant under our action, it must be the case that this monomial corresponds to a $3$-dimensional matching.
  Moreover, the action maps any monomial (invariant or not) to a constant times itself. 
  As $\cC_n(\bx, \by)$ must only compute invariant polynomials, this
	implies that equation~\eqref{eq:deg-n-cks-1} is actually of the following form:
	\begin{equation}\label{eq:deg-n-cks-2}
		\cC_n(\bx, \by) = \sum_{\bbm \in \cM_n(\bx)}
		f_\bbm(\by) \cdot \bbm,
	\end{equation}
	where $\cM_n(\bx)$ is the set of all $3$-dimensional matching monomials
	in the variables $\bx$.
  Moverover, since $\cC(\bx,\by)$ succinctly encodes of a set of generators, the span of $\{\cC_n(\bx,\alpha)\}_{\alpha\in\C^r}$ must necessarily be the same as the span of the 3-dimensional matching polynomials.

	We will now show that the existence of the circuit $\cC_n(\bx, \by)$ implies that
	${\rm NP \in P/poly}$. For that purpose, we will show that given $\cC_n(\bx, \by)$ one can
	solve the $3$-dimensional matching problem in ${\rm P/poly}$.
	Let $H$ be a tripartite 3-uniform hypergraph, whose edges are given by a subset
	$E \subseteq [n] \times [n] \times [n]$.
	We can associate to this graph the tensor $v \in V$ where
	$v_{ijk} = 1$ if hyperedge $(i,j,k) \in E$ and $v_{ijk} = 0$ otherwise.
	Note that $H$ has a $3$-dimensional matching of size $n$ if and only if at least one
	of the $3$-dimensional matching monomials \emph{does not vanish} on our tensor $v$.
	This last
	condition is equivalent to the fact that the circuit $\cC_n(v, \by)$ does not compute the
	zero polynomial (as we know that the span of the set $\{\cC_n(\bx, \alpha)\}_{\alpha \in \C^r}$
	is the same as the span of all $3$-dimensional matching monomials).
	Thus, to solve the $3$-dimensional
	matching problem in ${\rm P/poly}$ it is enough to give a randomized polynomial time algorithm
	for testing whether $\cC_n(v, \by)$ is the zero polynomial or not.\footnote{It is enough to
	give a randomized polynomial time algorithm because we know that ${\rm BPP/poly = P/poly}$.}

	Since $\cC_n(v, \by)$ is a circuit of size ${\rm poly}(n)$ with rational constants of bit
	complexity ${\rm poly}(n)$, it computes a polynomial $P(\by)$ with rational coefficients
	having bit complexity at most $\exp(\poly(n))$ and degree at most $\exp(\poly(n))$.
  This is the setting in which
	Theorem~\ref{thm:PIT-BPP} applies, giving us the desired randomized polynomial
	time algorithm. This concludes our proof modulo Proposition~\ref{prop:least-degree-torus},
	which we will now turn our attention to.
\end{proof}




\noindent
In the following proposition, we denote by $\mathcal{S}_n$ the symmetric group on $n$ letters.

\begin{proposition}\label{prop:least-degree-torus}
	The maximum $3$-dimensional matching monomials  $\prod_{i=1}^n x_{i \sigma(i) \tau(i)}$,
	where $\sigma, \tau \in \mathcal{S}_n$, span the invariants of degree
	$n$ of the natural action of $G = \ST_n(\C) \times \ST_n(\C) \times \ST_n(\C)$
	on $V = \C^n \otimes \C^n \otimes \C^n$. Moreover, there are no nonconstant invariants of degree less than $n$ for this action.
\end{proposition}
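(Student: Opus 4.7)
The plan is to exploit the fact that the torus action is diagonal on the basis of monomials, reducing the problem to combinatorics of exponent vectors.

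First, I would note that each basis monomial $\bbm = \prod x_{ijk}^{e_{ijk}}$ is a weight vector for $G = \ST_n(\C)^{\times 3}$: the element $(\ba,\bb,\bc)$ scales $\bbm$ by $\prod_i a_i^{\alpha_i(\be)} \prod_j b_j^{\beta_j(\be)} \prod_k c_k^{\gamma_k(\be)}$, where $\alpha_i(\be)=\sum_{j,k}e_{ijk}$, $\beta_j(\be)=\sum_{i,k}e_{ijk}$, $\gamma_k(\be)=\sum_{i,j}e_{ijk}$. Decomposing any polynomial into its monomial components and using that distinct monomials lie in distinct weight spaces (with respect to the full torus $\T_n^{\times 3}$, and hence in particular with respect to $G$ when their $G$-weights agree), I can conclude that the invariant ring $\C[V]^G$ is spanned by those monomials whose $G$-weight is trivial.

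Second, I would identify the trivially weighted monomials. The character group of $\ST_n(\C)=\{\ba\in\T_n : \prod_i a_i = 1\}$ is $\Z^n / \Z\cdot(1,\ldots,1)$: the character $\ba\mapsto\prod a_i^{\alpha_i}$ is trivial on $\ST_n(\C)$ iff $\alpha_1=\cdots=\alpha_n$. Applying this coordinate by coordinate, a monomial $\bbm$ is $G$-invariant iff there exist nonnegative integers $d_1,d_2,d_3$ with
\[
\sum_{j,k} e_{ijk} = d_1,\qquad \sum_{i,k} e_{ijk} = d_2,\qquad \sum_{i,j} e_{ijk} = d_3
\]
for every $i,j,k$. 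Summing in two different ways gives total degree $nd_1=nd_2=nd_3$, so $d_1=d_2=d_3=:d$ and the total degree of $\bbm$ equals $nd$.

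Third, I would draw the two claimed conclusions. If $\deg(\bbm)<n$, then $nd<n$ forces $d=0$, hence $\be=\mathbf{0}$ and $\bbm$ is a constant; this gives the ``no nonconstant invariants of degree less than $n$'' part. If $\deg(\bbm)=n$, then $d=1$: each row-sum, column-sum and depth-sum of the three-dimensional array $(e_{ijk})$ equals $1$, so $\be$ is a $0/1$ array whose support $W\subseteq[n]\times[n]\times[n]$ meets every ``axis-slice'' in exactly one entry. Such a $W$ is exactly a perfect $3$-dimensional matching of $[n]\times[n]\times[n]$, which can be written as $W=\{(i,\sigma(i),\tau(i)) : i\in[n]\}$ for unique permutations $\sigma,\tau\in\mathcal{S}_n$, giving $\bbm=\prod_{i=1}^n x_{i\,\sigma(i)\,\tau(i)}$.

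The argument is essentially elementary linear algebra plus bookkeeping on exponents; the only delicate point is the correct identification of the character group of $\ST_n(\C)$ with $\Z^n/\Z\cdot(1,\ldots,1)$, which accounts for the determinant-one constraint, and I would state this carefully before applying it coordinatewise to the three torus factors.
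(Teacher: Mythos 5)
Your proof is correct and follows essentially the same strategy as the paper's: reduce to invariant monomials (possible because the torus acts diagonally on the monomial basis) and then determine which monomials are killed by the determinant-one constraint. Where the paper exhibits an explicit test element $\ba=(\alpha^{n-1},\alpha^{-1},\ldots,\alpha^{-1})$, $\bb=\bc=(1,\ldots,1)$ to scale any non-matching monomial nontrivially, you instead identify the character group of $\ST_n(\C)$ with $\Z^n/\Z\cdot(1,\ldots,1)$ and derive the marginal conditions $\sum_{j,k}e_{ijk}=d_1$, etc.; these are two phrasings of the same fact, and your version has the small bonus of showing that invariant monomials exist only in degrees divisible by $n$ (with the degree-$nd$ ones being exactly the $d$-regular supports). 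One sentence should be repaired: the parenthetical claim that distinct monomials lie in distinct weight spaces of $\T_n^{\times 3}$ is false --- for instance $x_{111}x_{222}$ and $x_{122}x_{211}$ have identical marginals and hence identical weights. Fortunately you do not need it: since the action sends each monomial to a scalar multiple of itself and monomials are linearly independent, a polynomial is invariant iff every monomial appearing with nonzero coefficient is invariant, which is all your argument actually uses.
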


\begin{proof}
  Since the action maps any monomial to a constant times itself, it is easy to see that the invariant polynomials are generated by invariant monomials.
	To prove the proposition, it is therefore enough to show that the matching monomials are invariant, that there are no other invariant monomials of degree~$n$, and that there are no invariant monomials of smaller degree.

  We first prove that the matching monomials are invariant.
	Note that the natural action of $G$ on $V$ induces the following action on the variables $x_{ijk}$:
	$(\ba, \bb, \bc) \cdot x_{ijk} = (a_i b_j c_k)^{-1} \cdot x_{ijk}$.%
  \footnote{The inverse comes from the general formula $(g\cdot p)(v) := p(g^{-1} \cdot v)$. }
  Additionally, note that
	$\prod_{\ell=1}^n a_\ell = \prod_{\ell=1}^n b_\ell = \prod_{\ell=1}^n c_\ell = 1$. Given a
	matching monomial $\prod_{i =1}^n x_{i \sigma(i) \tau(i)}$, we therefore have that
	\begin{align*}
		(\ba, \bb, \bc) \cdot \prod_{i =1}^n x_{i \sigma(i) \tau(i)}
		& = \prod_{i =1}^n \left( (a_i b_{\sigma(i)} c_{\tau(i))})^{-1} \cdot x_{i \sigma(i) \tau(i)}  \right) \\
		& = \prod_{i=1}^n (a_i b_{\sigma(i)} c_{\tau(i)})^{-1}  \cdot
		\prod_{i =1}^n x_{i \sigma(i) \tau(i)} \\
		& = \prod_{i =1}^n x_{i \sigma(i) \tau(i)}
	\end{align*}
	where in the last equality we note that for any permutation $\sigma \in \mathcal{S}_n$ (or $\tau$)
	we have $1= \prod_{\ell=1}^n a_\ell = \prod_{\ell=1}^n a_{\sigma(\ell)}$ (and similarly for
	$\bb$ and $\bc$). This proves that the matching monomials are invariant monomials of the natural
	$G$-action on $V$.

	Now, let us prove that no other monomial of degree $n$ is an invariant for this action.
	Let $\prod_{m=1}^n x_{i_m j_m k_m}$ be a monomial, where $(i_m, j_m, k_m) \in [n]^3$, that is not a matching monomial.
	Then there exists some coordinate, w.l.o.g.\ the first coordinate, for which the set~$\{i_m\}_{m=1}^n$ is a strict subset of~$[n]$.
  Equivalently, there is an element $\ell \in [n]$ such that $\ell \not\in \{i_m\}_{m=1}^n$.
  W.l.o.g., we can assume that
	$\ell = 1$. Thus, the action of $\ba = (\alpha^{n-1}, \alpha^{-1}, \ldots, \alpha^{-1}),
	\bb = \bc = (1, \ldots, 1)$  on our monomial $\prod_{m=1}^n x_{i_m j_m k_m}$ is as follows:
	\begin{equation*}
		(\ba, \bb, \bc) \cdot \prod_{m=1}^n x_{i_m j_m k_m}
		= \alpha^n \cdot \prod_{m=1}^n x_{i_m j_m k_m}
	\end{equation*}
	which proves that this monomial is not an invariant.
  This completes the proof that the matching
	monomials span the invariants of degree $n$.

	Now we are left with proving that there are no nonconstant monomials of degree less than~$n$ that are invariant.
  Note that if we have a monomial with degree less than $n$, we can represent it as
	$\prod_{m=1}^d x_{i_m j_m k_m}$, where $d < n$ and by the pigeonhole principle, we know that
	there exists $\ell \in [n]$ such that $\ell$ does not appear as a first coordinate entry in
	the set of tuples $\{(i_m, j_m, k_m)\}$. If $d>0$ then, analogously to the previous paragraph, we know
	that such monomials cannot be invariants of the natural action of $G$ over $V$, therefore showing
	that no nonconstant monomial of degree $< n$ can be an invariant. This completes the proof.
\end{proof}

\section{Invariant Theory for \texorpdfstring{$\SL_n(\C)$}{SL\_n(C)} and Mulmuley's conjecture} \label{Sec:Ketan}


In this section, we will give another example of a group action on tensors for which any set of generating invariants is hard to compute, i.e., we will prove Theorem~\ref{MC-cex}. Even though the previous section already gives a counterexample, this example illustrates something more. The feature of this group action is that invariants of minimial degree span a $1$-dimensional space. In other words, up to scaling, we have a unique invariant of minimal degree. This unique invariant in the minimal degree is called the \emph{hyperpfaffian polynomial} (introduced by Barvinok in 1995 as a natural generalization of the well-known Pfaffian polynomial to higher order tensors). We then study the hyperpfaffian's computational complexity and prove that it is VNP-complete. The importance of this example is that such a unique invariant in the minimal degree is {\em essential} in any generating set. So, it is not even possible to give a generating set consisting of invariant polynomials that are easy to compute, even if we remove all restrictions on the size of the generating set.
Moreover, the group action is by $\SL_n(\C)$ rather than a torus.
Therefore our counterexample disproves Mulmuley's original conjecture in a strong sense.


\subsection{Invariant Rings and Symmetric Tensors}\label{sec:invariants}
The special linear group $\SL_n(\C)$ consists of complex $n\times n$-matrices with unit determinant and acts canonically on $\C^n$ by matrix-vector multiplication.
This action extends to any $m$-th tensor power $\tensor^m \C^n$ by
\begin{equation}\label{eq:SL on tensor power}
g \cdot (v_1 \otimes \cdots \otimes v_m) := (g \cdot v_1) \otimes \cdots \otimes (g \cdot v_m)
\end{equation}
and linear continuation.
We will always use the standard bilinear form on $\tensor^m \C^n$ that satisfies $\langle g^T \cdot v, w\rangle = \langle v, g \cdot w \rangle$ for all $v,w \in \tensor^m\C^n$, $g \in \SL_n(\C)$.

Let~$V$ be an arbitrary finite-dimensional $\SL_n(\C)$-representation (such as $V = \tensor^m\C^n$).
Then $\SL_n(\C)$ also acts on~$\C[V]_d$, the vector space of degree-$d$ homogeneous polynomials on~$V$, via the formula
\[
(g \cdot p)(v) := p(g^T \cdot v),
\]
where $p \in \C[V]_d, g \in G$ and $v \in V$.
The formula above is the dual representation of the action on the ring of all polynomial functions~$\C[V] = \bigoplus_{d=0}^\infty \C[V]_d$ that we explained in in Section~\ref{subsec:inv theo}.
Using the dual here is only for presentation purposes, as it gives a clearer connection to multilinear algebra as follows.
Note that a polynomial $p \in \C[V]$ is invariant if and only if $\forall g \in \SL_n(\C)$ we have $g \cdot p=p$.

It is convenient to identify polynomial functions with symmetric tensors.
Note that $\SL_n(\C)$ acts canonically on any $d$-th tensor power $\tensor^d V$ of~$V$.
This action restricts to the $d$-th symmetric tensor power~$\Sym^d V$, i.e., the $\mathcal{S}_d$-invariant subspace of $\tensor^d V$. Recall that $\mathcal{S}_d$ is the symmetric group on $d$ letters; it acts on $V^{\otimes d}$ by permuting tensor factors.
For any~$t \in \Sym^d V$, we can define a homogeneous degree-$d$ polynomial~$p\in\C[V]_d$ by~$p(v) := \langle t, v^{\otimes d}\rangle$.
Here we use the quadratic form on $\Sym^d V$ induced by a non-degenerate bilinear form on $V$ that satisfies $\langle g^t\cdot v, w\rangle = \langle v,g w\rangle$ for all $v,w \in V$, $g \in \SL_n(\C)$ as above.
Then, $p$ is invariant if and only if the symmetric tensor~$t$ is invariant, i.e., if $\forall g \in \SL_n(\C)$ we have~$g \cdot t=t$.
We will tacitly go back and forth between symmetric tensors in $\Sym^d \tensor^m \C^n$ and homogeneous polynomials in~$\C[\tensor^m \C^n]_d$.

Now, we turn to studying hyperpfaffians.



\subsection{Hyperpfaffians}
The \emph{Pfaffian} is the unique (up to scale) homogeneous $\SL_{2n}(\C)$-invariant of degree $n$ on~$\C^{2n}\otimes\C^{2n}$.
There are no $\SL_{2n}(\C)$-invariants in lower degrees.
If we identify $\C^{2n} \otimes \C^{2n}$ with the space of complex $2n\times 2n$ matrices~$A$, then the Pfaffian is invariant under the action of $\SL_{2n}(\C)$ given by $g \cdot A := gAg^T$.
The defining property of the Pfaffian generalizes to tensors of even order as follows (the classical Pfaffian is the special case of~$k=1$):

\begin{proposition}\label{pro:nolowerdegreeinv}
For any $k$ and $n$, there is a unique (up to scale) homogeneous $\SL_{2kn}(\C)$-in\-va\-ri\-ant polynomial~$\Pf_{k,n}$ of degree~$n$ on $\tensor^{2k} \C^{2kn}$.
$\Pf_{k,n}$ identifies with the symmetric tensor $e_1 \wedge \cdots \wedge e_{2kn} \in \Sym^n \tensor^{2k}\C^{2kn}$.
There are no nonconstant $\SL_{2kn}(\C)$-invariants of lower degree.
\end{proposition}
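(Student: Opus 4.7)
My plan is to translate the question into pure multilinear algebra via the identification between symmetric tensors and homogeneous polynomials (Section~\ref{sec:invariants}) and then resolve it with Schur--Weyl duality. An $\SL_{2kn}(\C)$-invariant homogeneous polynomial of degree $d$ on $\tensor^{2k}\C^{2kn}$ is the same data as an $\SL_{2kn}(\C)$-invariant element of $\Sym^d \tensor^{2k}\C^{2kn}$. Viewing the latter as the subspace of $\tensor^d(\tensor^{2k}\C^{2kn}) = \tensor^{2kd}\C^{2kn}$ fixed by the $\mathcal{S}_d$-action that permutes the $d$ blocks of $2k$ tensor factors, it suffices to describe $(\tensor^{2kd}\C^{2kn})^{\SL_{2kn}(\C)}$ together with this block-permutation symmetry.

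First I would invoke Schur--Weyl duality to write $\tensor^M \C^N = \bigoplus_{\lambda \vdash M,\, \ell(\lambda) \leq N} S^\lambda \C^N \otimes [\lambda]$ as a $\GL_N(\C) \times \mathcal{S}_M$-module, and recall that $S^\lambda \C^N$ restricts to a trivial $\SL_N(\C)$-representation exactly when $\lambda$ is rectangular with $N$ rows, $\lambda = (c^N)$. Specializing to $N = 2kn$ and $M = 2kd$ forces $d$ to be a nonnegative multiple of $n$: for $0 < d < n$ there are no invariants at all, which gives the second assertion of the proposition. For $d = n$ the only option is $\lambda = (1^{2kn})$, contributing a one-dimensional $\SL_{2kn}(\C)$-invariant in $\tensor^{2kn}\C^{2kn}$ spanned by the fully alternating tensor $\sum_{\pi \in \mathcal{S}_{2kn}} \sgn(\pi)\, e_{\pi(1)} \otimes \cdots \otimes e_{\pi(2kn)}$, which I will denote by $e_1 \wedge \cdots \wedge e_{2kn}$, and on which $\mathcal{S}_{2kn}$ acts by the sign representation.

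Finally I still need to check that this alternating tensor actually lies inside $\Sym^n \tensor^{2k}\C^{2kn}$, i.e., is fixed by the block subgroup $\mathcal{S}_n \hookrightarrow \mathcal{S}_{2kn}$, rather than being killed when we project. The plan here is a short sign computation: a $c$-cycle in $\mathcal{S}_n$ lifts to $2k$ disjoint $c$-cycles in $\mathcal{S}_{2kn}$, so a block permutation $\sigma$ with $k_\sigma$ cycles lifts to one with $2k \cdot k_\sigma$ cycles and sign $(-1)^{2kn - 2k k_\sigma} = (-1)^{2k(n - k_\sigma)} = +1$. Thus the sign representation restricts trivially to the block subgroup $\mathcal{S}_n$, so $e_1 \wedge \cdots \wedge e_{2kn}$ is automatically $\mathcal{S}_n$-fixed and descends to a nonzero element of $\Sym^n \tensor^{2k}\C^{2kn}$, which is then, up to scale, the unique degree-$n$ invariant $\Pf_{k,n}$. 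The main thing requiring care is keeping straight the two $\mathcal{S}_\bullet$-actions (on the $2kn$ tensor positions versus on the $n$ blocks); with that bookkeeping in place, everything else is standard Schur--Weyl theory.
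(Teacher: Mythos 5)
Your proposal is correct and follows essentially the same route as the paper: Schur--Weyl duality shows invariants in $\tensor^{2kd}\C^{2kn}$ exist only when $2kn \mid 2kd$ (ruling out $0<d<n$ and giving a one-dimensional space, spanned by the fully alternating tensor, at $d=n$), followed by a check that this tensor is fixed under permuting the $n$ blocks of size $2k$. Your explicit cycle-counting sign computation is just a more detailed version of the paper's observation that the blocks have even size and the wedge product is skew-commutative, so there is nothing to add.
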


\noindent
Before proving Proposition~\ref{pro:nolowerdegreeinv} we recall some representation theory.
The material is well-known, and we refer to standard texts (e.g., \cite{FH}) for details.
A \emph{partition} $\la$ is a nonincreasing sequence of natural numbers with finite support.
We write $\la \vdash_n m$ to say that~$|\la| := \sum_i \la_i = m$ and $\la_{n+1}=0$.
If $\la_{n+1}=0$, then we say that $\la$ is an $n$-partition.
The irreducible polynomial $\GL_n(\C)$-representations are indexed by $n$-partitions.
For a partition~$\la$, let~$\{\la\}$ denote the irreducible $\GL_n(\C)$-representation corresponding to~$\la$.
Restricted to $\SL_n(\C)$, the representation $\{\la\}$ is trivial if and only if $\la_1=\ldots=\la_n$; note that this implies that $n \mid m$.
The irreducible representations of $\mathcal{S}_m$ are indexed by partitions~$\la$ with~$|\la|=m$.
Let $[\la]$ denote the irreducible $\mathcal{S}_m$-representation corresponding to~$\la$.

Consider $\tensor^m \C^n$. This space has an action of $\SL_n(\C)$ by~\eqref{eq:SL on tensor power}, but also an action of $\mathcal{S}_m$ that permutes the tensor factors.
Both actions commute, so we have an action of the product group $\SL_n(\C)\times\mathcal S_m$. The following well-known result will be crucial for our purposes.

\begin{theorem} [Schur--Weyl duality]
As an $\SL_n(\C) \times \mathcal{S}_m$-representation, we have the decomposition:
\[ 
\tensor^m \C^n = \bigoplus_{\la \vdash_n m} \{\la\}\otimes[\la].
\] 
\end{theorem}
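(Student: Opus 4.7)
The plan is to establish Schur--Weyl duality via the double centralizer theorem applied to the mutually commuting actions of $\GL_n(\C)$ and $\mathcal{S}_m$ on $\tensor^m \C^n$, and then restrict the resulting decomposition to $\SL_n(\C)$. First I would verify that the two actions commute: for $g \in \GL_n(\C)$ and $\sigma \in \mathcal{S}_m$, both $\sigma \circ g^{\otimes m}$ and $g^{\otimes m} \circ \sigma$ send $v_1 \otimes \cdots \otimes v_m$ to $g v_{\sigma^{-1}(1)} \otimes \cdots \otimes g v_{\sigma^{-1}(m)}$. Letting $A \subseteq \mathrm{End}(\tensor^m \C^n)$ be the unital subalgebra generated by the image of $\GL_n(\C)$ and $B$ the image of the group algebra $\C[\mathcal{S}_m]$, this commutativity gives $A \subseteq Z(B)$ and $B \subseteq Z(A)$, where $Z(\cdot)$ denotes the centralizer in $\mathrm{End}(\tensor^m \C^n)$.

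The main technical step, and the crux of the argument, is the reverse inclusion $Z(B) \subseteq A$. I would argue as follows: via the canonical isomorphism $\mathrm{End}(\tensor^m \C^n) \cong \tensor^m \mathrm{End}(\C^n)$ (with $\mathcal{S}_m$ permuting tensor factors on both sides), the centralizer $Z(B)$ is identified with the $\mathcal{S}_m$-invariant subspace $\Sym^m \mathrm{End}(\C^n)$. The key claim is that $\Sym^m \mathrm{End}(\C^n)$ is linearly spanned by pure powers $\{X^{\otimes m} : X \in \mathrm{End}(\C^n)\}$; this follows from a polarization identity, recovering any symmetric tensor as a finite linear combination of values $(t_1 X_1 + \cdots + t_r X_r)^{\otimes m}$ via Vandermonde inversion in the parameters $t_i$. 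Since $\GL_n(\C)$ is Zariski-dense in $\mathrm{End}(\C^n)$, the same span is achieved using only invertible elements, so $Z(B) = \spa\{g^{\otimes m} : g \in \GL_n(\C)\} \subseteq A$, giving equality; symmetrically (using Maschke's theorem, which yields semisimplicity of $B$ in characteristic zero) one obtains $Z(A) = B$ and that $A$ itself is semisimple.

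With $A$ and $B$ identified as mutual centralizers of semisimple subalgebras, the double centralizer theorem yields a decomposition
\[
\tensor^m \C^n \;=\; \bigoplus_{U} U \otimes \Hom_A(U, \tensor^m \C^n),
\]
where $U$ runs over the isomorphism classes of irreducible $A$-modules that occur, and the $\Hom$-spaces are pairwise non-isomorphic irreducible $B$-modules. To identify the summands I would note that the irreducible $A$-modules appearing are irreducible polynomial $\GL_n(\C)$-representations homogeneous of degree $m$, which are exactly the $\{\la\}$ for $n$-partitions $\la \vdash_n m$; the bijective matching $\{\la\} \leftrightarrow [\la]$ follows by constructing a nonzero highest-weight vector of weight $\la$ in the image of a Young symmetrizer $c_\la \in \C[\mathcal{S}_m]$ acting on $\tensor^m \C^n$, or equivalently by comparing $\GL_n(\C)\times\mathcal{S}_m$-characters via the Frobenius character formula.

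Finally, to pass from $\GL_n(\C)$ to $\SL_n(\C)$, I would use that two polynomial irreducibles $\{\la\}, \{\mu\}$ of the same degree $|\la|=|\mu|=m$ restrict to isomorphic $\SL_n(\C)$-representations only if $\la-\mu=(k,\ldots,k)$ for some $k\in\Z$, which together with $nk=|\la|-|\mu|=0$ forces $\la=\mu$. Hence the summands remain pairwise non-isomorphic as $\SL_n(\C)\times\mathcal{S}_m$-representations, yielding the claimed decomposition. The main obstacle is the polarization/density argument in the second paragraph establishing $Z(B)=A$; everything else is bookkeeping once the double centralizer theorem is invoked.
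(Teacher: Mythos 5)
Your proposal is correct, but note that the paper does not prove this theorem at all: it states Schur--Weyl duality as a well-known result and points to standard references such as Fulton--Harris, so there is no in-paper argument to compare against. What you give is the classical double-centralizer proof, and the logic is sound: the commutation check, the identification $Z(B)\cong\Sym^m\mathrm{End}(\C^n)$, the polarization identity (valid here since $\C$ has characteristic zero), the Zariski-density step replacing arbitrary $X$ by invertible $g$, and the semisimplicity of $\C[\mathcal{S}_m]$ via Maschke all fit together exactly as needed to conclude $Z(B)=A$ and invoke the double centralizer theorem. Two small points you should make explicit to fully match the stated theorem: first, when identifying the summands you need that $\{\la\}$ occurs in $\tensor^m\C^n$ precisely when $\la$ has at most $n$ rows (the Young symmetrizer $c_\la$ has nonzero image on $\tensor^m\C^n$ iff $\ell(\la)\le n$), which is what produces the index set $\la\vdash_n m$ rather than all $\la\vdash m$; second, in the passage from $\GL_n(\C)$ to $\SL_n(\C)$ you argue that distinct summands stay non-isomorphic, but you should also record that each $\{\la\}$ remains \emph{irreducible} upon restriction --- immediate since $\GL_n(\C)$ is generated by $\SL_n(\C)$ together with scalars, which act on an irreducible polynomial representation by a character. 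With those two sentences added, the proof is complete and is the standard one the cited references give.
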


\noindent
Using Schur--Weyl duality, one sees immediately that $\tensor^m \C^n$ contains nonzero $\SL_n(\C)$-invariant vectors if and only if $n \mid m$.
This is because a vector is invariant if and only if it spans a trivial irreducible representation -- but $\{\la\}$ is trivial if and only if $\la_1=\ldots=\la_n$, as mentioned above.
For $m=n$, there is a unique (up to scale) $\SL_n(\C)$-invariant vector.
This is because the invariants in $\tensor^n \C^n$ correspond to the component $\{1^n\} \otimes [1^n]$, where we write~$1^n$ for the partition~$\la_1=\ldots=\la_n=1$.
Here, $\{1^n\}$ is the trivial representation of $\SL_n(\C)$ and $[1^n]$ is one-dimensional, as it is the sign representation of~$\mathcal{S}_n$.
Thus the space of invariants is one-dimensional. This unique vector (up to scale) is given by the wedge product $e_1 \wedge e_2 \wedge \cdots \wedge e_n$, where $a \wedge b := \frac 1 2 (a \otimes b - b \otimes a)$, and higher order wedge products are defined analogously.

\begin{proof}[Proof of Proposition~\ref{pro:nolowerdegreeinv}]
It suffices to show that $\Sym^d \tensor^{2k} \C^{2kn}$ contains no $\SL_{2kn}(\C)$-invariant vector if $0<d<n$ and that it contains a unique such vector if $d=n$.
Note that $\Sym^d \tensor^{2k} \C^{2kn}$ is a subspace of $\tensor^d \tensor^{2k} \C^{2kn} \simeq \tensor^{2kd} \C^{2kn}$.
Thus the first claim holds since $\tensor^{2kd} \C^{2kn}$ contains $\SL_{2kn}(\C)$-invariant vectors only if $2kn \mid 2kd$. Thus if $0 < d < n$, there are no invariants.
For~$d=n$, $\tensor^d \tensor^{2k} \C^{2kn} \simeq \tensor^{2kn} \C^{2kn}$ contains the unique $\SL_{2kn}(\C)$-invariant vector $v = (e_1 \wedge \cdots \wedge e_{2k}) \wedge \cdots \wedge (e_{2k(n-1)+1} \wedge \cdots \wedge e_{2kn})$.
It remains to show that $v$ is symmetric, i.e., an element of~$\Sym^d \tensor^{2k} \C^{2kn}$, which is a subspace of $\tensor^d \tensor^{2k} \C^{2kn}$.
But this is easy to see since each of the~$d$ blocks has even size~$2k$ and the wedge product is skew-commutative.
This proves the second claim.
\end{proof}

\noindent
The polynomial~$\Pf_{k,n}$ was introduced in~\cite[Def.~3.4]{Bar:95} in its monomial presentation, where it is called the \emph{hyperpfaffian}.
Note that, for fixed $k$, $\Pf_k := (\Pf_{k,1},\Pf_{k,2},\dots)$ is a p-family (i.e., both the degree and the number of variables are polynomially bounded), since~$\Pf_{k,n}$ has degree~$n$ and $(2kn)^{2k}$ variables.
The monomial presentation in~\cite{Bar:95} immediately yields that $\Pf_k \in \VNP$.
\begin{theorem}
For even $k$, $\Pf_k$ is \VNP-complete.
\end{theorem}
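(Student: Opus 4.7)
The plan is to establish $\VNP$-hardness of $\Pf_k$ by constructing a p-projection from the permanent family~$\per_n$, which is $\VNP$-complete by Valiant's theorem; combined with the membership $\Pf_k \in \VNP$ already noted from Barvinok's monomial presentation, this will yield $\VNP$-completeness.

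The starting point will be the explicit expansion of $\Pf_{k,n}$ obtained by pairing the symmetric tensor $e_1 \wedge \cdots \wedge e_{2kn}$ from Proposition~\ref{pro:nolowerdegreeinv} against $X^{\otimes n}$, namely
\[
\Pf_{k,n}(X) \;=\; \frac{1}{(2kn)!} \sum_{\pi \in \aS_{2kn}} \sgn(\pi) \prod_{j=1}^{n} X_{\pi(2k(j-1)+1), \ldots, \pi(2kj)}.
\]
After identifying $[2kn]$ with $[n] \times [2k]$, the reduction substitutes a tensor $B$ built from a symbolic $n \times n$ matrix $A = (A_{ij})$ as follows: set $B_{((r,1),\ldots,(r,k),(s,k+1),\ldots,(s,2k))} := A_{r,s}$ for all $r,s \in [n]$, and set every other entry of $B$ to zero. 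Every entry of $B$ is either a variable $A_{ij}$ or the constant~$0$, so this is a valid p-projection.

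Under this substitution, a $\pi \in \aS_{2kn}$ contributes a nonzero summand only if each of its length-$2k$ blocks has the ``block-diagonal'' form $((r_j,1),\ldots,(r_j,k),(s_j,k+1),\ldots,(s_j,2k))$; bijectivity of $\pi$ then forces both $r = (r_1,\ldots,r_n)$ and $s = (s_1,\ldots,s_n)$ to be permutations in $\aS_n$, and the block product equals $\prod_j A_{r_j, s_j}$. The key step is the sign computation: $\pi$ restricts to independent ``block permutations'' on the disjoint $r$- and $s$-subsets of $[n]\times[2k]$, and a block permutation by $\rho \in \aS_n$ on blocks of size~$k$ has sign~$\sgn(\rho)^k$ (each adjacent block swap being a product of $k$ transpositions). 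Hence $\sgn(\pi(r,s)) = \sgn(r)^k \sgn(s)^k$, which is uniformly $+1$ for $k$ even.

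Consequently $\Pf_{k,n}(B) = \tfrac{1}{(2kn)!} \sum_{r,s \in \aS_n} \prod_j A_{r_j, s_j} = \tfrac{n!}{(2kn)!}\per(A)$, where the last equality uses the change of variable $t = s \circ r^{-1}$ for each fixed $r$ to reduce the inner sum over~$s$ to $\per(A)$. Since the scalar is nonzero, this gives $\per_n \le_p \Pf_{k,n}$ and completes the proof. The main obstacle is the sign-parity bookkeeping, and this is also where evenness of $k$ enters essentially: for odd $k$ the same substitution would produce $\tfrac{n!}{(2kn)!}\det(A)$ instead, consistent with $\Pf_1$ being the classical Pfaffian, which lies in $\VP$ and so cannot be $\VNP$-hard unless $\VP = \VNP$.
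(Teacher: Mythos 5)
Your proof is correct and follows essentially the same route as the paper's: you substitute the same block-structured tensor (first $k$ slots from block $r$, last $k$ from block $s$, carrying the variable $A_{r,s}$), observe that only pairs of permutations contribute, and compute the sign as $\sgn(r)^k\sgn(s)^k$, which collapses to $+1$ for even $k$ and yields a nonzero multiple of the permanent. The only difference is presentational (explicit monomial expansion versus the paper's pairing $\langle v, p^{\otimes d}\rangle$), and your normalization constant is immaterial since the hyperpfaffian is only fixed up to scale.
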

\begin{proof}
We present a projection of $\Pf_{k,d}$ to the $d\times d$ permanent.
The same projection yields the determinant if $k$ is odd, which explains why the proof does not work for the classical Pfaffian ($k=1$).
The case~$k=2$ is enough to disprove Mulmuley's conjecture.

By Proposition~\ref{pro:nolowerdegreeinv}, the Pfaffian $\Pf_{k,d}$ identifies with the symmetric tensor
\begin{align*}
  v := e_1 \wedge \cdots \wedge e_{2kd} \in \Sym^d \tensor^{2k}\C^{2kd}.
\end{align*}
Thus, the evaluation~$\Pf_{k,d}(p)$ at a tensor $p \in \tensor^{2k} \C^{2kd}$ is given by $\langle v, p^{\otimes d}\rangle$ (cf.~\cite[Sec.~4.2(A)]{Ike:12}).
We choose
\[
p = \sum_{i,j=0}^{d-1} x_{i+1,j+1} (e_{1+2ki} \otimes e_{2+2ki} \otimes \cdots \otimes e_{k+2ki} \otimes e_{k+1+2kj} \otimes e_{k+2+2kj} \otimes \cdots \otimes e_{2k+2kj}),
\]
where the $x_{i,j}$ ($1 \leq i,j \leq d$) are formal variables.

The point $p$ is parametrized linearly by the $x_{i,j}$, so the evaluation of $\Pf_{k,d}$ at $p$ is a projection of $\Pf_{k,d}$.
We verify that the evaluation of $\Pf_{k,n}$ at $p$ gives the $d \times d$ permanent (up to a constant nonzero scalar) as follows.

{
\fontsize{10}{12}
\selectfont
\begin{align*}
\quad p^{\otimes d} &=\hspace{-.5cm}\sum_{i_1,j_1,\ldots,i_d,j_d=0}^{d-1}\hspace{-.5cm} x_{i_1+1,j_1+1} \cdots x_{i_d+1,j_d+1}
(e_{1+2ki_1} \otimes e_{2+2ki_1} \otimes \cdots \otimes e_{k+2ki_1} \otimes e_{k+1+2kj_1} \otimes e_{k+2+2kj_1} \otimes \cdots \otimes e_{2k+2kj_1})\\ &\quad\quad\quad\quad\quad\quad\quad\quad\otimes \cdots \otimes
(e_{1+2ki_d} \otimes e_{2+2ki_d} \otimes \cdots \otimes e_{k+2ki_d} \otimes e_{k+1+2kj_d} \otimes e_{k+2+2kj_d} \otimes \cdots \otimes e_{2k+2kj_d})
\end{align*}
}
and by linearity
{
\fontsize{10}{12}
\selectfont
\begin{align*}
\quad \langle v,p^{\otimes d}\rangle &= \hspace{-.5cm}\sum_{i_1,j_1,\ldots,i_d,j_d=0}^{d-1}\hspace{-.5cm} x_{i_1+1,j_1+1} \cdots x_{i_d+1,j_d+1}
\langle v,(e_{1+2ki_1} \otimes e_{2+2ki_1} \otimes \cdots \otimes e_{k+2ki_1} \otimes e_{k+1+2kj_1} \otimes e_{k+2+2kj_1} \otimes \cdots \otimes e_{2k+2kj_1})\\ &\quad\quad\quad\quad\quad\quad\quad\quad\otimes \cdots \otimes
(e_{1+2ki_d} \otimes e_{2+2ki_d} \otimes \cdots \otimes e_{k+2ki_d} \otimes e_{k+1+2kj_d} \otimes e_{k+2+2kj_d} \otimes \cdots \otimes e_{2k+2kj_d})\rangle
\end{align*}
}
A crucial property of $v$ is that $\langle v,e_{\pi(1)} \otimes e_{\pi(2)} \otimes \cdots \otimes e_{\pi(n)}\rangle \neq 0$ iff $\pi$ is a permutation, in which case it is equal to the sign of the permutation.
It follows that the nonzero summands in $\langle v,p^{\otimes d}\rangle$ are precisely those for which~$i=(i_1,\ldots,i_d)$ and $j=(j_1,\ldots,j_d)$ are permutations of $\{0,\ldots,d-1\}$.
For a single summand with $i$ and $j$ permutations we see:
\begin{align*}
&\quad x_{i_1+1,j_1+1} \cdots x_{i_d+1,j_d+1}
\langle v,(e_{1+2ki_1} \otimes e_{2+2ki_1} \otimes \cdots \otimes e_{k+2ki_1} \otimes e_{k+1+2kj_1} \otimes e_{k+2+2kj_1} \otimes \cdots \otimes e_{2k+2kj_1})\\ &\quad\quad\quad\quad\quad\quad\quad\quad\otimes \cdots \otimes
(e_{1+2ki_d} \otimes e_{2+2ki_d} \otimes \cdots \otimes e_{k+2ki_d} \otimes e_{k+1+2kj_d} \otimes e_{k+2+2kj_d} \otimes \cdots \otimes e_{2k+2kj_d})\rangle
\\&= \sgn(i)^k \sgn(j)^k x_{i_1+1,j_1+1} \cdots x_{i_d+1,j_d+1}.
\end{align*}
Hence, for even $k$ we obtain $\langle v,p^{\otimes d}\rangle = d! \per_d$.
\end{proof}

\noindent
Finally, we put together the preceding results to prove Theorem~\ref{MC-cex}.

\begin{theorem}[Theorem~\ref{MC-cex}, restated]
Let $k \geq 2$ be even.
Consider the action of $G = \SL_{2kn}(\C)$ on $V = \tensor^{2k} \C^{2kn}$. Then any set of generators for the invariant ring cannot have a polynomial sized (in $n$) succinct encoding, unless ${\rm VP = VNP}$.
\end{theorem}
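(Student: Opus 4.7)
The plan is to mimic the structure of the proof of Theorem~\ref{thm:torus-action}, with Proposition~\ref{pro:nolowerdegreeinv} playing the role of Proposition~\ref{prop:least-degree-torus}. From any hypothetical succinct encoding of the generators, I extract its degree-$n$ homogeneous component in the $\bx$ variables, use the one-dimensionality of the minimal-degree invariants to identify it as a scalar multiple of the hyperpfaffian $\Pf_{k,n}$, and then invoke the preceding $\VNP$-completeness of $\Pf_k$ to conclude $\VP = \VNP$.

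Concretely, suppose $\cC(\bx, \by)$ is a circuit of size $s = \poly(n)$ with $r = \poly(n)$ auxiliary variables succinctly encoding a generating set for $\C[V]^G$. Applying Proposition~\ref{prop:hom-components} to $\cC$, viewed as a circuit in $\Q[\by][\bx]$, yields a circuit $\cC_n(\bx, \by)$ of size $O(n^2 s)$ computing the degree-$n$ homogeneous component of $\cC$ with respect to $\bx$. For each $\alpha \in \C^r$ the specialization $\cC(\bx, \alpha)$ is $G$-invariant, and since the $\SL_{2kn}(\C)$-action preserves $\bx$-degree, so is its component $\cC_n(\bx, \alpha)$. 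By Proposition~\ref{pro:nolowerdegreeinv} the space of invariants of degree $n$ is one-dimensional, spanned by $\Pf_{k,n}$. Hence one obtains the factorization
\[
\cC_n(\bx, \by) = f(\by)\,\Pf_{k,n}(\bx)
\]
for some polynomial $f(\by) \in \Q[\by]$.

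The key step is to show that $f \not\equiv 0$, an essentiality property of $\Pf_{k,n}$ in any generating set. Since $\{\cC(\bx, \alpha)\}_{\alpha \in \C^r}$ generates $\C[V]^G$ as a $\C$-algebra, $\Pf_{k,n}$ is a polynomial expression in these specializations. By Proposition~\ref{pro:nolowerdegreeinv} every nonconstant invariant has $\bx$-degree at least $n$, so any product of two or more non-constant specializations has $\bx$-degree at least $2n$ and cannot contribute in degree $n$. Expanding such an expression into homogeneous components in $\bx$, the degree-$n$ piece is therefore a $\C$-linear combination of the $\cC_n(\bx, \alpha)$, so some $\cC_n(\bx, \alpha^*)$ must be nonzero and hence $f(\alpha^*) \neq 0$. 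Fixing such an $\alpha^*$, the circuit $\cC_n(\bx, \alpha^*)/f(\alpha^*)$ has size $\poly(n)$ and computes $\Pf_{k,n}$; this places the p-family $\Pf_k = (\Pf_{k,n})_n$ in $\VP$, and combined with its $\VNP$-completeness for even $k$ we conclude $\VP = \VNP$.

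I expect the main subtlety to be the essentiality step. Unlike the proof of Theorem~\ref{thm:torus-action}, where ensuring each specialization is invariant was the main technical work (carried out via Proposition~\ref{prop:least-degree-torus}), here the rigidity supplied by Proposition~\ref{pro:nolowerdegreeinv}, namely the one-dimensionality of the lowest nontrivial invariant subspace, forces any generating set to realize $\Pf_{k,n}$ through its degree-$n$ slice, independently of the encoding chosen. Once this observation is in place, the remaining steps are the same ``extract a low-degree homogeneous component and specialize'' strategy used for the torus.
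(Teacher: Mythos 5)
Your proposal is correct and takes essentially the same route as the paper, which itself only sketches this argument by reference to the torus case: extract the degree-$n$ homogeneous component, identify it as $f(\by)\,\Pf_{k,n}(\bx)$ via the one-dimensionality in Proposition~\ref{pro:nolowerdegreeinv}, and invoke the VNP-completeness of $\Pf_k$. Your explicit essentiality argument (that some specialization must have a nonzero degree-$n$ part because products of two or more nonconstant invariants have degree at least $2n$) correctly fills in a step the paper leaves implicit.
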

\begin{proof}

We summarize the results so far. Let $k \geq 2$. Consider the action of $G = \SL_{2kn}(\C)$ on $V = \tensor^{2k} \C^{2kn}$. Then:

\begin{enumerate}
\item There are no homogeneous invariant polynomials of degree $< n$.

\item The space of homogeneous invariant polynomials of degree $n$ is $1$-dimensional, and spanned by the hyperpfaffian polynomial $\Pf_{k,n}$.

\item The hyperpfaffian polynomial $\Pf_{k,n}$ is VNP-complete.
\end{enumerate}
The rest of the proof proceeds along the same lines as in the proof of Theorem~\ref{thm:torus-action} in Section~\ref{sec:torus}. If we had a poly-sized succinct encoding for the generators of this invariant ring, then one would be able to extract the lowest degree part, which would yield a poly-sized circuit computing $\Pf_{k,n}$.
This is not possible unless ${\rm VP = VNP}$, since $\Pf_{k,n}$ is VNP-complete.
\end{proof}

\bibliographystyle{alpha}
\bibliography{refs}

\end{document}